\documentclass[journal,twoside,web]{ieeecolor}
\usepackage{generic}
\usepackage{cite}

\usepackage{hyperref}
\hypersetup{hidelinks=true}
\usepackage{textcomp}
\def\BibTeX{{\rm B\kern-.05em{\sc i\kern-.025em b}\kern-.08em
    T\kern-.1667em\lower.7ex\hbox{E}\kern-.125emX}}
\usepackage{graphicx}  

\usepackage{amsmath,amssymb,amsthm}
\usepackage{autobreak}
\allowdisplaybreaks
\usepackage{mathrsfs}
\usepackage{color}
\usepackage{booktabs}
\usepackage{array}
\usepackage{svg}
\usepackage{siunitx} 
\usepackage{enumerate} 
\usepackage{graphicx}
\usepackage{epstopdf}
\usepackage{caption}
\usepackage{multicol}  
\usepackage{multirow}  
\usepackage{flushend}
\usepackage{booktabs}
\usepackage[noend]{algpseudocode}
\usepackage{wrapfig}
\newtheorem{remark}{Remark} 
\newtheorem{lemma}{Lemma} 
\newtheorem{question}{Problem} 
\newtheorem{corollary}{Corollary} 
\newtheorem{theorem}{Theorem} 
\newtheorem{definition}{Definition} 
\newtheorem{proposition}{Proposition}
\newtheorem{assumption}{Assumption}
\usepackage[ruled]{algorithm2e}
\definecolor{qiblue}{rgb}{0, 0, 1}
\definecolor{qiblack}{rgb}{0, 0, 0}
\newcommand{\qs}[1]{\textcolor{qiblack}{#1}}

\begin{document}

\title{Distributed Model Predictive Control for Optimal Consensus of Constrained Heterogeneous Multi-agent Systems}
\author{Nan~Bai, ~Tao Liu, \IEEEmembership{Member, IEEE},~Qishao~Wang, \IEEEmembership{Senior Member, IEEE},~and Zhisheng~Duan, \IEEEmembership{Senior Member, IEEE}
	\thanks{Nan Bai is with the Department of Electronic and Computer Engineering, The Hong Kong University of Science and Technology, Hong Kong SAR (e-mail: eenanbai@ust.hk).
		Tao Liu is with the Department of Electrical and Electronic Engineering, The University of Hong Kong (HKU), Hong Kong SAR, and the HKU Shenzhen Institute of Research and Innovation, Shenzhen, China (e-mail: taoliu@eee.hku.hk).
		Qishao Wang is with the Department of Dynamics and Control, Beihang University, and Beijing Key Laboratory of Evolutionary Vertical Takeoff and Landing Aircraft Technology, Beijing, China (email: wangqishao@buaa.edu.cn).
		Zhisheng Duan is with the School of Advanced Manufacturing and Robotics, Peking University, Beijing, China (e-mail: duanzs@pku.edu.cn).}
	\thanks{The work was supported in part by the National Natural Science Foundation of China under grants T2121002, 62373025 and 12332004, in part by the Beijing Municipal Science \& Technology Commission and Administrative Commission of Zhongguancun Science Park under Grant Z241100005424001 and in part by the Fundamental Research Funds for the Central Universities.}
}

\maketitle

\begin{abstract}
This paper investigates the distributed optimal consensus control problem of constrained heterogeneous multi-agent systems within a model predictive control (MPC) scheme.
\qs{Both the control input sequence and
the dynamically feasible
consensus equilibrium are optimized simultaneously within the proposed MPC framework to improve consensus performance, yielding a coupled constrained optimization problem at each prediction time.
A distributed primal--dual algorithm is developed to solve the resulting optimization problem, and locally
verifiable conditions are derived to guarantee its convergence. Furthermore, sufficient terminal conditions are established
for the proposed MPC framework to guarantee the recursive feasibility and
asymptotic consensus of the closed-loop heterogeneous multi-agent
systems.}
Finally, numerical simulations
verify the effectiveness of the proposed approach.
\end{abstract}

\begin{IEEEkeywords}
Model predictive control, distributed optimization, multi-agent system, optimal consensus.
\end{IEEEkeywords}

\vspace{-0.5em}
\section{Introduction}
\qs{With advances in automation and networked systems, cooperative
	control of multi-agent systems has attracted considerable attention.
}
Through collaboration among individuals, the adaptability and robustness of the multi-agent system can be enhanced to address complex tasks.
\qs{Consensus control is a fundamental topic in cooperative
	multi-agent systems and has found broad applications such as robotic
	coordination, quadrotor formation, autonomous manufacturing, and
	smart grids~\cite{Ren2025,Oh2015,Savazzi2021,Wen20212}.
Conventional consensus control methods \cite{Olfati-Saber2007,longwang,Lin2017} primarily focus on driving agents to
reach an agreement, but do not specify how the
agreement should be achieved or which consensus state is preferable.
This motivates the optimal consensus control problem \cite{Shi2024,Xian2025,Liang2024,Cao2010,LiY2022,Zhang2026},
which incorporates a performance criterion into the controller
design. Depending on the specific formulation, the criterion may
account for transient disagreement, control effort, or the quality
of the resulting consensus equilibrium. Among the existing
formulations, linear-quadratic optimal consensus has been
extensively studied~\cite{Cao2010,LiY2022,Zhang2026}, as it admits a tractable
controller that balances transient consensus performance
and control effort while ensuring asymptotic consensus under
appropriate design.}

\qs{Notice that these methods \cite{Cao2010,LiY2022, Zhang2026} 
	generally assume
	a prescribed consensus equilibrium, typically the origin, thereby
	limiting the degrees of freedom available for performance
	optimization. To overcome this limitation, the consensus state is introduced as an additional decision variable in~\cite{Wang2021,Bai2023}, allowing the control input and the consensus state to be optimized simultaneously. Nevertheless, state and input constraints are not explicitly considered in these studies. Such constraints are ubiquitous in practical applications. For example, robots may operate within a bounded
	workspace~\cite{Gonzalez-Banos2002}, while actuator inputs must
	remain within admissible limits to ensure operational
	safety~\cite{DAversa2013}.
	Model predictive control (MPC) provides an effective framework for
	optimizing control performance while explicitly handling constraints, and has been successfully applied to
	industrial processes and aerial vehicles~\cite{Forbes2015,Slegers2006,Chen2021}.}
	For multi-agent systems, distributed MPC is a more promising way to take full advantage of the collaboration \cite{Negenborn2014}. Existing work on the consensus issue has been established for second-order integrator systems \cite{Zhu2018}, general linear systems \cite{Shi2018}, and nonlinear systems \cite{Gao2017}. In addition, as extensions of consensus problems, the formation and flocking tasks based on a given reference trajectory are addressed in \cite{Dai2017} and \cite{Lyu2021}, respectively.

\qs{While 
	most existing MPC methods for multi-agent systems adopt a
	prescribed equilibrium or reference trajectory, they do
	not exploit the potential performance benefits of optimizing the
	consensus equilibrium online.
	To relax this restriction, local
	artificial targets are introduced as optimization variables in the MPC formulation \cite{Hirche2020}, where agreement among local targets is promoted
	through soft disagreement penalties. Accordingly, the optimized local targets are not guaranteed to coincide at each MPC update and thus may not define a common
	consensus equilibrium.
	A more recent work~\cite{Bai2024TAC} moves a step further by
	imposing hard consensus constraints on local targets and jointly optimizing the resulting 
	consensus targets and control inputs. Nevertheless, it does not ensure that the
optimized target constitutes a feasible equilibrium for
every agent. Consequently, the corresponding
recursive feasibility and asymptotic consensus guarantees cannot be directly
extended to general heterogeneous linear
multi-agent systems.
Motivated by these limitations,
the dynamical feasibility of
the optimized equilibrium needs to be explicitly incorporated into the MPC formulation, while local equilibrium variables should also respect the consensus constraint. The resulting
optimization problem therefore contains a local coupling
between the control input and the consensus equilibrium, together
with a global consensus coupling. 
To solve optimization problems with coupling constraints,
the alternating direction method of multipliers (ADMM) \cite{Boyd2010} 
and related primal--dual
methods \cite{Falsone2020,Bastianello2022,Carli2020,Pan2023,Gao2017a} provide effective tools. However, a direct application of these methods may not fully exploit the particular structure of the considered problem, leading to auxiliary variable copies, additional communication, or heavy computation burden to obtain
the exact solution of local subproblems involving coupled decision
variables. Since the optimization problem must be solved
repeatedly within the MPC scheme, these considerations motivate a suitable distributed algorithm tailored to the underlying optimization problem.
}

\qs{Building on the aforementioned advances, this paper proposes an MPC framework equipped with a distributed solver to address the consensus control problem of constrained heterogeneous multi-agent systems. The main contributions of this paper are summarized as follows.}
\begin{enumerate}[(1)]
	\item \qs{A novel MPC-based optimal consensus formulation is proposed for heterogeneous multi-agent systems, where both the control input and the consensus equilibrium are optimized simultaneously. Different from existing simultaneous optimization formulations in \cite{Wang2021,Bai2023,Hirche2020,Bai2024TAC}, the consensus
		equilibrium is explicitly restricted to the admissible equilibrium set of each agent, ensuring that the optimized common equilibrium is dynamically feasible and maintainable under heterogeneous dynamics and local constraints.}
	\item 
\qs{	A distributed projected primal-dual algorithm is developed
	to solve the resulting MPC optimization problem at
	each prediction time. By exploiting the graph sparsity and
	the local projected gradient structure, the control sequence
	and consensus equilibrium are updated in parallel using only
	local and neighboring information. Locally verifiable
	conditions are further derived to guarantee convergence of the
	proposed algorithm.}
	\item 
	\qs{The closed-loop properties of the proposed MPC framework are
		rigorously analyzed. Suitable terminal ingredients are developed
		to guarantee the recursive feasibility and the asymptotic consensus of the heterogeneous multi-agent system.
		} 
\end{enumerate}

\textbf{Notation:} Throughout this paper, $\mathbb{R}^n$ denotes the set of $n$-dimensional real column vectors; $\mathbb{R}^{m\times n}$ denotes the set of $m\times n$ real matrices; col$\{x_1,x_2,\cdots,x_N\}\triangleq \left [x_1^{\top},x_2^{\top},\ldots,x_N^{\top}\right]^{\top}$ denotes the collection of vectors $x_i\in\mathbb{R}^{n_i}$, $i\in\{1,2,\ldots,N\}$;
$\mathbf{1}$ and $\mathbf{0}$ denote column vectors with all elements of 1 and 0 in proper dimensions, respectively;
for a matrix $\mathcal{A}\in\mathbb{R}^{m\times n}$, $\mathcal{A}_{ij}$ denotes the element located in the $i$th row and $j$th column of $\mathcal{A}$.
rand$(a,b)$ represents a random number with a uniform distribution on the interval $[a,b]$;
$\otimes$ denotes the Kronecker product of matrices; $\|*\|$ denotes the Euclidean norm of vectors;
$\langle*,*\rangle$ denotes the inner product of vectors;
for matrix ${A}\in\mathbb{R}^{n\times n}$, $A>\mathbf{0}~(A\geq \mathbf{0})$ means that $A$ is a positive (semi-)definite matrix;
 for matrices ${A}\in\mathbb{R}^{n\times n},~{B}\in\mathbb{R}^{n\times m},~\|{B}\|^2_{A}\triangleq {B}^{\top}{A}{B}$; $\mathscr{P}_{\mathbb{Q}}\{*\}$ means projecting $*$ onto space $\mathbb{Q}$; the Minkowski sum of two convex sets are described as $\mathcal{X}\oplus\mathcal{Y}=\{x+y\big| x\in\mathcal{X},~y\in\mathcal{Y}\}$; the Pontryagin set of two sets is $\mathcal{X}\ominus\mathcal{Y}=\{z\big| z+y\in \mathcal{X},\forall y \in \mathcal{Y}\}$.

\vspace{-6pt}
\section{Preliminaries and Problem Formulation}
\subsection{Problem Statement}
This paper focuses on the constrained heterogeneous multi-agent systems, and each agent has the following discrete-time linear dynamics
{\small\begin{align}
	x_i(k+1)&=A_ix_i(k)+B_iu_i(k),\notag\\
	x_i(k)&\in\mathcal{X}_i,~u_i(k)\in\mathcal{U}_i,\label{dyn}
\end{align}}%
where $A_i\in\mathbb{R}^{n\times n}$ is the system matrix and $B_i\in\mathbb{R}^{n\times m}$ is the control input matrix with full column rank for all $i\in\{1,2,\ldots,N\}$. Both matrices are constant and may be different for different agents but with the same dimension.
Assume $\mathcal{X}_i$ is a convex and closed subset of $\mathbb{R}^{n}$ and $\mathcal{U}_i$ is a convex and compact subset of $\mathbb{R}^{m}$ to restrict agent's behavior such as bounded system state and actuator saturation.

Considering the constrained dynamics (\ref{dyn}), the set of feasible equilibrium $\mathscr{E}_i$ of each agent is defined as follows.
\begin{definition}
	The set of feasible equilibrium for each agent $i$, $i=1,2,\ldots,N$, is defined as
{\small	\begin{align}
		\mathscr{E}_i=\big\{\big(z_i,u_i^e\big)~\Big| z_i\in\mathcal{X}_i,u_i^e\in\mathcal{U}_i,
		z_i=A_iz_i+B_iu_i^e\big\},\nonumber
	\end{align} }%
	where $z_i$ denotes the feasible equilibrium state and $u_i^e$ is the corresponding equilibrium control input.
\end{definition}
\begin{assumption}
	The feasible equilibrium set $\mathscr{E}_i$ is non-empty for each agent.
\end{assumption}

Agents are allowed to communicate with their neighbors through a communication network, which is characterized by an undirected and connected graph $\mathcal{G}=(\mathcal{V},\mathcal{E})$.
The adjacency matrix of $\mathcal{G}$ is defined as $\mathcal{A}\in\mathbb{R}^{N\times N}$, where
$\mathcal{A}_{ii}=0$ and $\mathcal{A}_{ij}=\mathcal{A}_{ji}=1$ if $(i,j)\in\mathcal{E}$, otherwise $\mathcal{A}_{ij}=\mathcal{A}_{ji}=0$. Let $\mathcal{N}_i=\{j\in\mathcal{V}|(i,j)\in\mathcal{E}\}$ denote the neighbor set of agent $i$.
The Laplacian matrix of $\mathcal{G}$ is denoted by $\mathcal{L}\in\mathbb{R}^{N\times N}$, where $\mathcal{L}_{ii}=\sum_{j=1}^{N}\mathcal{A}_{ij}$ and $\mathcal{L}_{ij}=-\mathcal{A}_{ij}$ for $i\neq j$. It follows from \cite{Olfati-Saber2007} that the Laplacian matrix $\mathcal{L}$ has a simple zero eigenvalue associated with eigenvector $\mathbf{1}$. Then, the consensus manifold \cite{Bai2023} can be defined as
{\small\begin{align}
	\mathcal{Z}=\{\check{Z}\triangleq\operatorname{col}\{\check{z}_1,\check{z}_2,\ldots,\check{z}_N\}\in\mathbb{R}^{Nn}\big|~(\mathcal{L}\otimes I_n)\check{Z}=\mathbf{0}\}.\label{mani}
\end{align}}%
This paper aims to enable agents to reach the optimized stable consensus state through collaboration, that is, each agent is desired to converge to $(z_i,u_i^e)\in\mathscr{E}_i$, where $z_i$ lies on the consensus manifold (\ref{mani}).
\vspace{-1em}
\subsection{MPC-based Problem Formulation}
\qs{In this paper, an MPC framework will be utilized to solve the optimal consensus control problem. Assume that $t_k$ is the real time of the $k$th prediction, $\mathcal{T}$ is the finite prediction horizon, and $\Delta t = t_{k+1}-t_k$ is the control period satisfying $1\leq \Delta t \leq\mathcal{T}$. 
	Let the vector $u_i(l|t_k)$ denote the predicted input of the $i$-th agent at time instant $l+t_k$ based on the $k$th prediction. Similarly, the corresponding predicted state is represented by $x_i(l|t_k)$, and the optimized equilibrium state and control input is denoted by $z_i(t_k)$ and $u_i^e(t_k)$, respectively.
	Then, define the cost function for each individual at the $k$th prediction as}
\vspace{-1em}
{\small\begin{align}
	J_i(\textbf{u}_i\qs{(t_k)},z_i\qs{(t_k)})=\sum_{l  = 0}^{\mathcal{T}-1}\Big(&\|x_i(l\qs{|t_k})-z_i\qs{(t_k)}\|^2_{Q_i}\nonumber\\+\|u_i(l\qs{|t_k})-u_i^e\qs{(t_k)}\|^2_{R_i}&\Big)
	+\|x_i(\mathcal{T}\qs{|t_k})-z_i\qs{(t_k)}\|^2_{P_i},\label{cost}
\end{align}}%
where $\mathbf{u}_i\qs{(t_k)}=\operatorname{col}\{u_i(0\qs{|t_k}),u_i(1\qs{|t_k}),\ldots,u_i(\mathcal{T}-1\qs{|t_k})\}$, and $Q_i$, $R_i$, and $P_i$ are positive definite matrices.
Since $B_i$ has the full column rank, the equilibrium control input can be computed by $u_i^e\qs{(t_k)}=D_iz_i\qs{(t_k)}$ with $D_i=(B_i^{\top}B_i)^{-1}B_i^{\top}(I_n-A_i)$. \qs{Therefore, the cost (\ref{cost}) can be regarded as a function of
	$\mathbf{u}_i(t_k)$ and $z_i(t_k)$, and a nonempty, closed, and convex admissible
	equilibrium state set $\tilde{\mathcal{Z}}_i$ is selected such that
	$
	\tilde{\mathcal{Z}}_i
	\subseteq
	\left\{
	z_i \,\middle|\,
	(I-A_i-B_iD_i)z_i=\mathbf{0},\;
	z_i\in\mathcal{X}_i,\;
	D_i z_i\in\mathcal{U}_i
	\right\}.
	$}
	
	\vspace{-0.25em}
\begin{remark}
	The assumption that $B_i$ has the full column rank is mild and extensively used \cite{Shi2018}. In practice, it implies that the system does not contain redundant inputs. Otherwise, one can choose those independent control channels for the system.
\end{remark}%

\vspace{-0.25em}
Then, the optimal consensus control problem \qs{within the MPC framework} can be formulated as follows.
\begin{question}[\qs{MPC-based optimal consensus control problem}]
	At time $t_k$ with the given $x_i(t_k)$, $i\in\{1,2,\ldots,N\}$, solve
	\begin{subequations}
{\small		\begin{align}
			\min _{\mathbf{U}(t_k),\mathbf{Z}(t_k)} \mathbf{J}(\mathbf{U}(t_k),\mathbf{Z}&(t_k))=\sum_{i=1}^N J_i(\mathbf{u}_i(t_k),z_i(t_k))\nonumber\\
			\text{s.t.}~ x_i(l+1|t_k)=&~A_ix_i(l|t_k)+B_iu_i(l|t_k),\label{dlabel}\\
			\tilde{\mathcal{L}}Z(t_k)=&~\mathbf{0},\label{LZ}\\
			z_i(t_k)\in&~\tilde{\mathcal{Z}}_i,\label{zlabel}\\
			x_i(0|t_k)=&~x_i(t_k),\label{xin}\\
			u_i(l|t_k)\in~&\mathcal{U}_i,~l \in[0,\mathcal{T}-1],\label{uconstraints}\\
			x_i(l|t_k)\in~&\mathcal{X}_i,	~l\in[0,\mathcal{T}],\label{xconstraints}\\
			x_i(\mathcal{T}|t_k)\in ~&\mathcal{X}_{i\mathcal{T}}\label{xconstraints2},
		\end{align}\label{problem1}}%
	\end{subequations}%
where $\tilde{\mathcal{L}}=\mathcal{L}\otimes I_n$, $\mathcal{X}_{i\mathcal{T}}$ is a nonempty closed and convex terminal set satisfying $\mathcal{X}_{i\mathcal{T}}\subseteq\mathcal{X}_i$, $\mathbf{U}(t_k)=\operatorname{col}\{\mathbf{u}_1(t_k)$, $\mathbf{u}_2(t_k)$, $\ldots,$ $\mathbf{u}_N(t_k)\}$, $\mathbf{u}_i(t_k)=\operatorname{col}\{$ $u_i(0|t_k)$, $u_i(1|t_k)$, $\ldots,$ $u_i(\mathcal{T}-1|t_k)\}$, and $\mathbf{Z}(t_k)=\operatorname{col}\{z_1(t_k)$, $z_2(t_k)$, $\ldots,$ $z_N(t_k)\}$.
\end{question}

\begin{assumption}
	The admissible equilibrium sets have a common relative interior
	point, i.e., $
		\bigcap_{i=1}^{N}\operatorname{ri}
		\bigl(\tilde{\mathcal Z}_i\bigr)
		\neq \emptyset.
$
\end{assumption}

\begin{remark}
\qs{The multi-agent system \eqref{dyn} considered in this paper consists of dynamically decoupled agents, and each agent is subject to local state and input constraints. Dynamic couplings and inter-agent coupled constraints, such as
	collision avoidance constraints, are not included in the present
	formulation and are left for future research.}
\end{remark}

\begin{remark}
\qs{The previous work \cite{Bai2024TAC}  also studies an
	MPC-based optimal consensus problem in which the control input and the consensus state are jointly optimized.
		However, its original formulation does not explicitly require
		the optimized consensus state to constitute an admissible
		equilibrium for every heterogeneous agent.}
		In contrast, this paper introduces a novel problem formulation (\ref{problem1}) tailored to heterogeneous systems. \qs{First,} the consensus state $z_i\qs{(t_k)}$ of each agent is restricted within its admissible equilibrium set $\tilde{\mathcal{Z}}_i$, as shown in (\ref{zlabel}). \qs{This guarantees that every optimized consensus state is associated with an admissible equilibrium pair $(z_i\qs{(t_k)},u_i^e\qs{(t_k)})$ that is dynamically feasible for the $i$th agent at each prediction time $\qs{t_k}$.} \qs{Second,} 
		the equilibrium-input deviation term $\|u_i(l\qs{|t_k})-u_i^e\qs{(t_k)}\|_{R_i}^2$ is explicitly incorporated into the cost function (\ref{cost}). \qs{This embeds the equilibrium information into the optimization objective and establishes a direct connection between transient performance optimization and steady-state feasibility.}
		These two features jointly integrate feasibility and stability considerations into the optimal consensus formulation and provide the foundation for the recursive feasibility and asymptotic consensus analysis developed in Section \uppercase\expandafter{\romannumeral 4}.
\end{remark}

\vspace{-0.5em}
\qs{Based on Problem~1, the MPC-based optimal consensus
	control scheme is summarized in Algorithm~\ref{alg3}.
	At each prediction time $t_k$, Problem~1 is solved using a
	distributed solver, which will be developed in Section~III. The first
	$\Delta t$ control moves of the resulting optimal input sequence $\mathbf{u}_i^\star(t_k)$
	are then applied, after which the optimization
	problem is updated using the newly measured states.}

\vspace{-0.5em}
\begin{algorithm}[!htbp]
	\caption{MPC-based Optimal Consensus Control Scheme}
	\label{alg3}
	\textbf{Initialize}{ initial value $x_i(0|t_k)=x_i(t_k)$
		at each update time $t_k$, $k=0,1,\ldots$}
	\begin{algorithmic}[1]
		\State{\qs{\textbf{solve} 
				Problem~1 to obtain  $\mathbf{u}_i^{\star}(t_k)=\operatorname{col}\{{u}_i^{\star}(0|t_k)$, ${u}_i^{\star}(1|t_k)$, $\ldots$, ${u}_i^{\star}(\mathcal{T}-1|t_k)\}$ and ${z}_i^{\star}(t_k)$};
			\State{\textbf{apply} ${u}_i^{\star}(\tau\big| t_k)$, $\tau \in [0,\Delta t-1]$ as the real-time input;}
			\State{\textbf{set} $t_{k+1}=t_k+\Delta t$ and $k=k+1$, and \textbf{return} to step~1.}		}
	\end{algorithmic}
\end{algorithm}

\vspace{-1.5em}
\section{Distributed Solver for the MPC Optimization Problem}
\qs{
	As described in Algorithm~1, the online implementation of the
	proposed MPC scheme requires Problem~1 to be solved repeatedly
	at each prediction time $t_k$.
	In contrast to the traditional MPC problem with predefined consensus target or reference (e.g.,
	\cite{Zhu2018,Shi2018,Gao2017}),
both the control input and the consensus target are viewed as decision variables. This provides greater flexibility in optimizing control performance. However, it also couples these decision variables in
the local objective \eqref{cost}.
To exploit this structure, we develop a projected primal-dual scheme based on the augmented Lagrangian to jointly optimize  $\textbf{u}_i\qs{(t_k)}$ and $z_i\qs{(t_k)}$.
Before presenting the algorithm, 
	the local constraints (\ref{uconstraints})-(\ref{xconstraints2}) are collected into a feasible control sequence
	set. For a fixed prediction time $t_k$, define
	$\tilde{\mathcal{U}}_i(t_k)\triangleq \big\{\mathbf{u}_i\in\mathbb{R}^{m\mathcal{T}}\big|~\forall l\in[0,\mathcal{T}-1]$, $u_i(\qs{l|t_k})\in\mathcal{U}_i$, $x_i(\qs{l|t_k})\in\mathcal{X}_i$, $x_i(\mathcal{T}\qs{|t_k})\in\mathcal{X}_{i\mathcal{T}}\big\}$,
	where $x_i(0|t_k)=x_i(t_k)$ and the predicted states are
	generated according to the linear dynamics~\eqref{dlabel}.
}%

\qs{In the following, we will design a distributed algorithm tailored for Problem 1 as the inner
	optimization solver of Algorithm~\ref{alg3}, followed by its convergence analysis.
	For simplicity, the time-indexed notation \((t_k)\) is omitted
	throughout the remainder of this section.}
	
	\vspace{-1em}
\subsection{Distributed Algorithm Design}
Consider the augmented Lagrangian of Problem~1 in the following form
{\small\begin{align}
	L_{\rho}(\mathbf{U},\mathbf{Z},\mathbf{\Lambda})=\mathbf{J}(\mathbf{U},\mathbf{Z})+\mathbf{\Lambda}^{\top}\tilde{\mathcal{L}}\mathbf{Z}+\frac{\rho}{2}\|\mathbf{Z}\|^2_{\tilde{\mathcal{L}}},\label{augmented}
\end{align}}%
where $\rho>0$ is the augmented Lagrangian parameter, and $\mathbf{\Lambda}=\operatorname{col}\{\lambda_1,\lambda_2,\ldots,\lambda_N\}\in\mathbb{R}^{Nn}$ is the Lagrangian multiplier with respect to the consensus constraint (\ref{LZ}).
\qs{Then, a distributed parallel algorithm tailored for Problem~1 is designed in the following Algorithm~\ref{alg2}.}

	\vspace{-0.25em}
\begin{remark}
	\qs{Since the consensus constraint \eqref{LZ} is the global coupling constraint among agents, while the remaining constraints in Problem 1 are locally defined, we consider the augmented Lagrangian function \eqref{augmented} with respect to the consensus constraint \eqref{LZ}. This treatment avoids introducing additional multipliers for local constraints. Meanwhile, the other local constraints are explicitly enforced by the projection operators in \eqref{A1} and \eqref{A2}.}
\end{remark}

	\vspace{-1em}
\begin{algorithm}[!hbtp]
	\caption{Distributed Parallel Optimal Consensus Algorithm}
	\label{alg2}
	\textbf{Initialize} {$q=0$, initial value $\mathbf{u}_i^0$, $z_i^0$ and $\lambda_i^0=\mathbf{0}$, individual cost function $J_{i}\left(\mathbf{u}_{i}, z_{i}\right)$, $\rho>0$, the auxiliary parameters $g_{u_i}>0$ and $g_{z_i}>0$, stopping tolerances $\varepsilon_i>0$ and $\vartheta_i>0$, 
		\qs{and the stopping flag $s_i^0=0$.}}
	\begin{algorithmic}[1]
		\State \textbf{repeat}
		\State	 {~~~~$\forall i\in\{1,2,\ldots,N\}$, \textbf{(parallel) update}
		{\small	\begin{subequations}
				\begin{align}
			\mathbf{u}_i^{q+1} =&\mathcal{P}_{\tilde{\mathcal{U}}_i}\Big\{	\mathbf{u}_i^{q}-g_{u_i}\nabla_{\mathbf{u}_i}J_i(\mathbf{u}_i^q,z_i^q)\Big\}\label{A1}\\
			z_i^{q+1}=&\mathcal{P}_{\tilde{\mathcal{Z}}_i}\bigg\{	{z}_i^{q}-g_{z_i}\Big[\nabla_{z_i}J_i(\mathbf{u}_i^q,z_i^q)+\sum_{j\in\mathcal{N}_i}(\lambda_i^q-\lambda_j^q)\nonumber\\
			&\qquad\qquad\qquad+\rho\sum_{j\in\mathcal{N}_i}(z_i^q-z_j^q) \Big]\bigg\}\label{A2}
		\end{align}
			\end{subequations}}%
	
			\State ~~~~\textbf{update} the Lagrange multiplier 
		{\small	\begin{align}
				\lambda_i^{q+1}=&\lambda_i^{q}+\rho z_i^{q+1},\label{disA3}
			\end{align}}%
			
			\State ~~~~\textbf{communicate} $z_i^{q+1}$ and $\lambda_i^{q+1}$  with neighbors.

	\State ~~~~\qs{\textbf{update} the local stopping flag 
		{\small	\begin{align*}
				s_i^{q+1}= \begin{cases} 1, & \text{if } | J_i(\mathbf u_i^{q+1},z_i^{q+1}) -J_i(\mathbf u_i^{q},z_i^{q})|\leq \varepsilon_i\\ & \text{and } \|\sum_{j\in\mathcal N_i}(z_i^{q+1}-z_j^{q+1})\|\leq \vartheta_i,\\ 0, & \text{otherwise}, \end{cases} 	
		\end{align*}}%
		and perform a distributed termination check}
	
	\State	~~~	\textbf{set} $q=q+1$
	}
	\State	\textbf{until} \qs {the distributed termination check confirms \(s_i^q=1\), $\forall i\in\{1,2,\ldots,N\}$
	}
\end{algorithmic}	

\end{algorithm}
	

\qs{In Algorithm~\ref{alg2}, the all-agent stopping condition can be
	checked in a distributed manner using a finite-time
	termination detection mechanism (e.g., \cite{R4}). If only a subset of agents satisfies the local stopping criterion,
	no agent stops independently; instead, all agents continue the
	iterations until the all-agent stopping condition is confirmed. In addition,
	the \textbf{parallelism} is reflected in the fact that the
	$\mathbf{u}_i$-subproblem~(\ref{A1}) and the
	$z_i$-subproblem~(\ref{A2}) can be solved in parallel, since both
	rely on the historical information from the previous iteration. Considering the system dynamics \eqref{dyn} of each agent, the detailed procedures of \eqref{A1}-\eqref{A2} could be further specified, which could be found in  Appendix \uppercase\expandafter{\romannumeral 1}.}

\vspace{-1em}
\subsection{Convergence Analysis}
Then, the convergence properties of Algorithm~\ref{alg2} are investigated in the following theorem.
\begin{theorem}
	\label{theorem 1}
	Suppose that Assumption~2 holds and Problem~1 is feasible at the
	considered prediction time.
	Then, the solution sequence $\{(\mathbf{U}^q,\mathbf{Z}^q)$, $q=1,2,\ldots\}$ with $\mathbf{U}^q=\operatorname{col}\{\mathbf{u}_1^q$, $\mathbf{u}_2^q$, $\ldots,$ $\mathbf{u}_N^q\}$ and $\mathbf{Z}^q=\operatorname{col}\{z_1^q$, $z_2^q$, $\ldots,$ $z_N^q\} $ generated by Algorithm~2 converges to the optimal solution $(\mathbf{U}^{\star},\mathbf{Z}^{\star})$, if the following conditions are satisfied
{\small\begin{align}
		\begin{aligned}
			G_{u_i}> \mathbf{0}, ~G_{u_i}^{-1}-L_{\delta_i}I_{m\mathcal{T}}> \mathbf{0},\\
		G_{z_i}> \mathbf{0},~ G_{z_i}^{-1}-(L_{\delta_i}+2\rho \mathcal{L}_{ii}) I_n> \mathbf{0}
		\end{aligned}\label{conver}
\end{align}}%
for each agent,	where
 $G_{u_i}=g_{u_i} I_{m\mathcal{T}}$, $G_{z_i}=g_{z_i}I_{n}$, and $L_{\delta_i}$ is the Lipschitz constant of the gradient of $J_i(\mathbf{u}_i,z_i)$, as specified in the proof.
\end{theorem}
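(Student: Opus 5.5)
We will treat Algorithm~\ref{alg2} as a linearized (projected-gradient) augmented-Lagrangian primal--dual method and prove convergence by a Lyapunov/Fejér-monotonicity argument around a saddle point.

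\emph{Step 1: existence of a saddle point.} By Assumption~2 and feasibility of Problem~1, Slater-type constraint qualification holds for the linear coupling constraint $\tilde{\mathcal L}\mathbf Z=\mathbf 0$ relative to the convex local sets $\tilde{\mathcal U}_i\times\tilde{\mathcal Z}_i$. Because $Q_i,R_i,P_i>\mathbf 0$, the cost $\mathbf J$ is convex and coercive in $\mathbf U$. This yields an optimal $(\mathbf U^\star,\mathbf Z^\star)$ and a multiplier $\mathbf\Lambda^\star$. These satisfy the variational inequalities
\begin{align*}
\langle \nabla_{\mathbf U}\mathbf J(\mathbf U^\star,\mathbf Z^\star),\mathbf U-\mathbf U^\star\rangle\ge0,\qquad \langle \nabla_{\mathbf Z}\mathbf J(\mathbf U^\star,\mathbf Z^\star)+\tilde{\mathcal L}\mathbf\Lambda^\star,\mathbf Z-\mathbf Z^\star\rangle\ge0
\end{align*}
for all feasible $\mathbf U,\mathbf Z$, together with $\tilde{\mathcal L}\mathbf Z^\star=\mathbf 0$.

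We then rewrite the algorithm in stacked form. The update \eqref{A2} uses $\tilde{\mathcal L}\mathbf\Lambda^q+\rho\tilde{\mathcal L}\mathbf Z^q$. Since the dual update \eqref{disA3} adds $\rho z_i^{q+1}$ and only $\tilde{\mathcal L}\mathbf\Lambda$ enters, we work with the effective multiplier $\mathbf Y^q=\tilde{\mathcal L}\mathbf\Lambda^q$, or equivalently with $\mathbf\Lambda$ restricted to the range of $\tilde{\mathcal L}$. Note that $\tilde{\mathcal L}\mathbf\Lambda^{q+1}=\tilde{\mathcal L}\mathbf\Lambda^q+\rho\tilde{\mathcal L}\mathbf Z^{q+1}$, which is a standard dual ascent on the constraint.

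\emph{Step 2: one-step inequality.} We take $L_{\delta_i}$ as the Lipschitz constant of $\nabla J_i$ in $(\mathbf u_i,z_i)$; it exists because $J_i$ is quadratic. The projection characterizations of \eqref{A1}--\eqref{A2} give, for the feasible points $\mathbf U^\star,\mathbf Z^\star$,
\begin{align*}
\langle G_u^{-1}(\mathbf U^{q+1}-\mathbf U^q)+\nabla_{\mathbf U}\mathbf J^q,\mathbf U^\star-\mathbf U^{q+1}\rangle\ge0,
\end{align*}
and the analogous inequality for $\mathbf Z$, which includes the terms $\tilde{\mathcal L}\mathbf\Lambda^q+\rho\tilde{\mathcal L}\mathbf Z^q$. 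We add these to the optimality conditions of Step 1. We then bound the gradient mismatch $\nabla\mathbf J^q$ versus the value at $q+1$ using the descent lemma, $\mathbf J(w^{q+1})\le \mathbf J(w^q)+\langle\nabla\mathbf J^q,w^{q+1}-w^q\rangle+\tfrac{L_\delta}{2}\|w^{q+1}-w^q\|^2$, together with convexity. The delayed penalty term $\rho\tilde{\mathcal L}(\mathbf Z^q-\mathbf Z^{q+1})$ is bounded via $\|\tilde{\mathcal L}\|$ as given by Gershgorin, $\lambda_{\max}(\mathcal L)\le2\max_i\mathcal L_{ii}$. The dual residual is handled with the polarization identity. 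The target is
\begin{align*}
V^{q+1}\le V^q-\|\mathbf U^{q+1}-\mathbf U^q\|^2_{G_u^{-1}-L_\delta I}-\|\mathbf Z^{q+1}-\mathbf Z^q\|^2_{G_z^{-1}-(L_\delta+2\rho\mathcal L_{ii})I}-c\|\tilde{\mathcal L}\mathbf Z^{q+1}\|^2,
\end{align*}
where
\begin{align*}
V^q=\tfrac12\|\mathbf U^q-\mathbf U^\star\|^2_{G_u^{-1}}+\tfrac12\|\mathbf Z^q-\mathbf Z^\star\|^2_{G_z^{-1}}+\tfrac1{2\rho}\|\mathbf\Lambda^q-\mathbf\Lambda^\star\|^2_{\tilde{\mathcal L}}.
\end{align*}
The dual weighting is chosen in the $\tilde{\mathcal L}$-seminorm so that the cross term $\langle\tilde{\mathcal L}(\mathbf\Lambda^q-\mathbf\Lambda^\star),\mathbf Z^{q+1}-\mathbf Z^\star\rangle$ telescopes. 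Conditions \eqref{conver}, which are blockwise diagonal and hence locally checkable, make the weight matrices positive definite.

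\emph{Step 3: convergence.} Summing the inequality shows that $V^q$ is bounded and nonincreasing. It also shows that $\|\mathbf U^{q+1}-\mathbf U^q\|$, $\|\mathbf Z^{q+1}-\mathbf Z^q\|$ and $\|\tilde{\mathcal L}\mathbf Z^{q+1}\|$ are summable and hence tend to zero. Boundedness gives a cluster point. Passing to the limit in the projection fixed-point equations shows that this cluster point satisfies the KKT conditions of Step 1, so it is a saddle point. Since the analysis holds for any saddle point, we may reapply it with the cluster point as reference; the standard Opial-type argument then gives convergence of the whole sequence to $(\mathbf U^\star,\mathbf Z^\star)$.

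\emph{Main obstacle.} The hardest part will be Step 2. The difficulty is that the $\mathbf Z$-update uses the old penalty $\rho\tilde{\mathcal L}\mathbf Z^q$ and old multiplier, while the dual uses $\mathbf Z^{q+1}$. Getting the cross terms to cancel while keeping only the local margin $2\rho\mathcal L_{ii}$ will require care. I would first try absorbing $\rho\langle\tilde{\mathcal L}(\mathbf Z^{q+1}-\mathbf Z^q),\mathbf Z^{q+1}-\mathbf Z^\star\rangle$ via the three-point identity and then bounding $\rho\|\mathbf Z^{q+1}-\mathbf Z^q\|^2_{\tilde{\mathcal L}}\le 2\rho\sum_i\mathcal L_{ii}\|z_i^{q+1}-z_i^q\|^2$.
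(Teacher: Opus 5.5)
Your proposal is correct. Step~2 is essentially the paper's derivation: the projection inequalities, the three-point descent bound (the paper's Lemma~2), and polarization identities for the $G^{-1}$, penalty and dual terms. It also uses $2D_{\mathcal L}-\mathcal L\geq\mathbf 0$, with $D_{\mathcal L}=\operatorname{diag}\{\mathcal L_{11},\ldots,\mathcal L_{NN}\}$, to localize the step sizes. Note that it is this matrix inequality, not the eigenvalue bound $\lambda_{\max}(\mathcal L)\le 2\max_i\mathcal L_{ii}$, that yields the agent-wise condition \eqref{conver}; it is the bound you correctly write at the end.

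Your Lyapunov function weights $\mathbf Z$ by $G_Z^{-1}$ instead of the paper's $G_Z^{-1}-\rho\tilde{\mathcal L}$, and this is harmless. The term $\frac\rho2\|\mathbf Z^{q+1}\|^2_{\tilde{\mathcal L}}$ produced by the three-point identity on $\rho(\mathbf Z^{q+1})^\top\tilde{\mathcal L}(\mathbf Z^{q+1}-\mathbf Z^q)$ cancels exactly against the dual residual $\frac1{2\rho}\|\mathbf\Lambda^{q+1}-\mathbf\Lambda^q\|^2_{\tilde{\mathcal L}}$. What remains is a decrease $-\frac\rho2\|\mathbf Z^q\|^2_{\tilde{\mathcal L}}$, at index $q$ and with factor $\frac12$ on the increment terms rather than $1$. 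The two Lyapunov functions therefore differ exactly by $\frac\rho2\|\mathbf Z^q\|^2_{\tilde{\mathcal L}}$.

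The genuine difference is the final step. The paper keeps the gap $L(\mathbf U^{q+1},\mathbf Z^{q+1},\mathbf\Lambda^\star)-L(\mathbf U^\star,\mathbf Z^\star,\mathbf\Lambda^\star)\geq0$ in the one-step inequality and sums it to show it vanishes. It then concludes directly, using:
\begin{itemize}
\item strong convexity of $\mathbf J$, which holds because $(\mathbf u_i,z_i)\mapsto(z_i,u_i(l)-D_iz_i)$ is injective and $Q_i,R_i>\mathbf 0$;
\item first-order optimality of $(\mathbf U^\star,\mathbf Z^\star)$ for $L(\cdot,\cdot,\mathbf\Lambda^\star)$ over the local sets.
\end{itemize}
Together these give $\frac\mu2\|(\mathbf U^q,\mathbf Z^q)-(\mathbf U^\star,\mathbf Z^\star)\|^2\to0$, with no cluster points, no limit passage in the projections, and no Opial argument.

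Your asymptotic-regularity/KKT/Opial route does not need strong convexity, so it would survive for merely convex costs, and it additionally gives convergence of $\tilde{\mathcal L}\mathbf\Lambda^q$. The price is the bookkeeping you rightly anticipated. Cluster points must be taken in $\mathbf Y^q=\tilde{\mathcal L}\mathbf\Lambda^q$, because the component of $\mathbf\Lambda^q$ along $\mathbf 1\otimes\mathbb R^n$ accumulates $\rho$ times the network averages of the $\mathbf Z^s$. That component is unbounded whenever the consensus value is nonzero. Also, uniqueness of the minimizer (again from strong convexity) is what identifies the Opial limit with \emph{the} optimal solution in the statement.
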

\noindent
\textbf{Proof.} The proof is given in Appendix \uppercase\expandafter{\romannumeral 2}.\qed

\begin{remark}
	\qs{It can be found that the compact form of Algorithm~\ref{alg2} can be interpreted as a projected gradient primal-dual type method applied to the augmented Lagrangian \eqref{augmented}. Here a modified penalty term $\|\mathbf{Z}\|^2_{\tilde{\mathcal{L}}}$ for the consensus constraint is employed in the augmented Lagrangian function, differing from the typical form $\|\tilde{\mathcal{L}}\mathbf{Z}\|_2^2$.
		This variation
		is based on the fact that $\|\mathbf{Z}\|_{\tilde{\mathcal{L}}}={0}$ and $\|\tilde{\mathcal{L}}\mathbf{Z}\|={0}$ are equivalent \cite[Observation 7.1.6]{Horn2012}.
		Furthermore,
		the saddle point of (\ref{augmented}) is equivalent to the optimal solution of Problem~1, which can be proved through their respective definitions and using the Karush-Kuhn-Tucker conditions. 
		This modified form preserves the same consensus manifold and facilitates the distributed decomposition used in Algorithm~2. The convergence of the resulting iterations is established in Theorem~1.
		}
\end{remark}

\vspace{-0.25em}
\qs{The convergence conditions (\ref{conver}) show that the auxiliary
	parameters $G_{u_i}$ and $G_{z_i}$ can be set by each agent using
	only its local information.
	Together with the distributed execution procedure, the proposed
	Algorithm~\ref{alg2} can be implemented in a fully distributed
	fashion. Moreover, under the convergence conditions (\ref{conver}), the convergence rate of Algorithm~2 could be further estimated in the following.}

\vspace{-0.25em}
\begin{corollary}
	Let \(\mathbf{\Lambda}^\star\) be an optimal Lagrange multiplier associated with the optimal solution \((\mathbf{U}^\star,\mathbf{Z}^\star)\), and suppose that $\|\mathbf{\Lambda}^{\star}\|<\gamma$, where $\gamma$ is a bounded positive constant. Then,
	for the iterates $\{(\mathbf{U}^q,\mathbf{Z}^q,\mathbf{\Lambda}^q)$, $q=1$, $2$, $\ldots\}$ generated by Algorithm~2, define the weighted iterates
{\small	\begin{align}
		\hat{\mathbf{U}}=\frac{1}{t}\sum_{q=1}^{t}\mathbf{U}^q,\quad	\hat{\mathbf{Z}}=\frac{1}{t}\sum_{q=1}^{t}\mathbf{Z}^q,\quad	\hat{\mathbf{\Lambda}}=\frac{1}{t}\sum_{q=1}^{t}{\mathbf{\Lambda}}^q,
		\label{can}
	\end{align}}%
	and the convergence rate of Algorithm~2 can be estimated as
	{\small\begin{align}
		\begin{aligned}
			J(\hat{\mathbf{U}},\hat{\mathbf{Z}})-J({\mathbf{U}}^{\star},{\mathbf{Z}}^{\star})\leq \frac{1}{ t}\Gamma_0 ,\quad 
			\|\tilde{\mathcal{L}}  \hat{\mathbf{Z}}\|  \leq \frac{1}{ t\left(\gamma-\left\|\mathbf{\Lambda}^{\star}\right\|\right)}\Gamma_0,
		\end{aligned}\label{rate}
	\end{align}}%
	where
	$(\mathbf{U^{\star}},\mathbf{Z^{\star}})$ is the optimal solution, $\Gamma_0$ $=\frac{1}{2}$ $(\|{\mathbf{U}}^{\star}-$ ${\mathbf{U}}^{0}\|_{G_U^{-1}}^2$ $+\|{\mathbf{Z}}^{\star}-{\mathbf{Z}}^{0}\|_{G_Z^{-1}-\rho \tilde{\mathcal{L}}}^2+$ $\frac{\gamma^2}{\rho}$ $\lambda_{\max }\{\mathcal{L}\})$, 	$G_U=\operatorname{diag}\{$ $G_{u_1}$, $G_{u_2}$, $\ldots,$ $G_{u_N}\}$, $G_Z=\operatorname{diag}\{G_{z_1}$, $G_{z_2}$, $\ldots,$ $G_{z_N}\}$,
	and $\lambda_{\max }\{\mathcal{L}\}$ is the maximum eigenvalue of the Laplacian matrix~$\mathcal{L}$.
\end{corollary}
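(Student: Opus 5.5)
The plan is the standard ergodic-rate argument for projected primal--dual methods, built on the one-step inequality from the proof of Theorem~\ref{theorem 1}. Write Algorithm~2 in compact form with $G_U$, $G_Z$ and $\tilde{\mathcal{L}}$. The $\mathbf{U}$-step is a projected gradient step on $\mathbf{J}$ over $\prod_i\tilde{\mathcal{U}}_i$. The $\mathbf{Z}$-step is a projected gradient step on $\mathbf{J}+\mathbf{\Lambda}^\top\tilde{\mathcal{L}}\mathbf{Z}+\frac{\rho}{2}\|\mathbf{Z}\|^2_{\tilde{\mathcal{L}}}$ over $\prod_i\tilde{\mathcal{Z}}_i$. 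The dual step is $\mathbf{\Lambda}^{q+1}=\mathbf{\Lambda}^q+\rho\mathbf{Z}^{q+1}$. The relevant residual is $\tilde{\mathcal{L}}\mathbf{Z}$, since $\lambda_i$ enters only through Laplacian differences; equivalently one tracks $\tilde{\mathcal{L}}\mathbf{\Lambda}$.

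First I would take an arbitrary feasible point $(\mathbf{U},\mathbf{Z})$ and an arbitrary multiplier $\mathbf{\Lambda}$. Using the variational inequalities of the two projections, convexity of $J_i$, and the descent lemma with Lipschitz constant $L_{\delta_i}$, I would derive a per-iteration bound of the form
\begin{align*}
L(\mathbf{U}^{q+1},\mathbf{Z}^{q+1},\mathbf{\Lambda})-L(\mathbf{U},\mathbf{Z},\mathbf{\Lambda}^{q+1})\le \Phi^q-\Phi^{q+1}.
\end{align*}
Here $L$ is the ordinary Lagrangian $\mathbf{J}+\mathbf{\Lambda}^\top\tilde{\mathcal{L}}\mathbf{Z}$. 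The potential is
\begin{align*}
\Phi^q=\tfrac12\big(\|\mathbf{U}-\mathbf{U}^q\|^2_{G_U^{-1}}+\|\mathbf{Z}-\mathbf{Z}^q\|^2_{G_Z^{-1}-\rho\tilde{\mathcal{L}}}+\tfrac1\rho\|\mathbf{\Lambda}-\mathbf{\Lambda}^q\|^2_{\tilde{\mathcal{L}}}\big).
\end{align*}
The remaining quadratic terms in consecutive differences must be nonpositive. This is exactly where conditions \eqref{conver} are used. The inequality $G_{z_i}^{-1}>(L_{\delta_i}+2\rho\mathcal{L}_{ii})I$ gives $G_Z^{-1}-\rho\tilde{\mathcal{L}}-L_\delta I\ge0$, because $\tilde{\mathcal{L}}\le 2\,\mathrm{diag}(\mathcal{L}_{ii})\otimes I_n$ by Gershgorin. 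The dual term is handled by the identity $2\langle a-b,\tilde{\mathcal{L}}(b-c)\rangle=\|a-c\|^2_{\tilde{\mathcal{L}}}-\|a-b\|^2_{\tilde{\mathcal{L}}}-\|b-c\|^2_{\tilde{\mathcal{L}}}$ applied to the multiplier recursion. I expect this step to be the main obstacle. The delicate part is matching the $\rho\tilde{\mathcal{L}}$ cross term between $\mathbf{Z}^{q+1}$ and $\mathbf{\Lambda}$ so that it telescopes, rather than leaving an indefinite remainder; this may require the dual update to be evaluated at $\mathbf{Z}^{q+1}$ as the algorithm does.

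Next I would sum the bound from $q=0$ to $t-1$, divide by $t$, and use convexity of $\mathbf{J}$ and linearity in $\mathbf{\Lambda}$ (Jensen) to pass to the averages \eqref{can}. Setting $(\mathbf{U},\mathbf{Z})=(\mathbf{U}^\star,\mathbf{Z}^\star)$, which satisfies $\tilde{\mathcal{L}}\mathbf{Z}^\star=0$, gives
\begin{align*}
\mathbf{J}(\hat{\mathbf{U}},\hat{\mathbf{Z}})-\mathbf{J}^\star+\mathbf{\Lambda}^\top\tilde{\mathcal{L}}\hat{\mathbf{Z}}\le \Phi^0/t.
\end{align*}
With $\mathbf{\Lambda}^0=\mathbf{0}$, I would then choose $\mathbf{\Lambda}=\gamma\,\tilde{\mathcal{L}}\hat{\mathbf{Z}}/\|\tilde{\mathcal{L}}\hat{\mathbf{Z}}\|$, or $\mathbf{\Lambda}=\mathbf{0}$ when the residual vanishes. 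Then $\frac1\rho\|\mathbf{\Lambda}\|^2_{\tilde{\mathcal{L}}}\le\frac{\gamma^2}{\rho}\lambda_{\max}\{\mathcal{L}\}$, so $\Phi^0\le\Gamma_0$ and
\begin{align*}
\mathbf{J}(\hat{\mathbf{U}},\hat{\mathbf{Z}})-\mathbf{J}^\star+\gamma\|\tilde{\mathcal{L}}\hat{\mathbf{Z}}\|\le\Gamma_0/t.
\end{align*}
Dropping the nonnegative term gives the first bound in \eqref{rate}.

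For the feasibility bound I would use the saddle-point property of $\mathbf{\Lambda}^\star$ stated in the preceding remark:
\begin{align*}
\mathbf{J}(\hat{\mathbf{U}},\hat{\mathbf{Z}})-\mathbf{J}^\star\ge-\langle\mathbf{\Lambda}^\star,\tilde{\mathcal{L}}\hat{\mathbf{Z}}\rangle\ge-\|\mathbf{\Lambda}^\star\|\,\|\tilde{\mathcal{L}}\hat{\mathbf{Z}}\|.
\end{align*}
Substituting this into the combined inequality yields $(\gamma-\|\mathbf{\Lambda}^\star\|)\|\tilde{\mathcal{L}}\hat{\mathbf{Z}}\|\le\Gamma_0/t$, which is the second bound in \eqref{rate}. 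This completes the plan, with the telescoping of the $\rho\tilde{\mathcal{L}}$ cross terms being the step that needs care.
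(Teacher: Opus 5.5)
Your proposal is correct and follows essentially the same route as the paper. Your per-iteration bound is exactly the one-step inequality \eqref{M1}, with the same weight $\mathcal{G}_1$ and the cross term handled via $\mathbf{\Lambda}^q=\mathbf{\Lambda}^{q+1}-\rho\mathbf{Z}^{q+1}$ and the three-point identity as in \eqref{Lip3}--\eqref{Lip5}, followed by telescoping, convexity to pass to the averages, $\mathbf{\Lambda}$ on the $\gamma$-ball with $\mathbf{\Lambda}^0=\mathbf{0}$, and the saddle-point/Cauchy--Schwarz step, which is the content of Lemma~\ref{lemma4} that you prove inline where the paper cites it and takes a supremum over $\mathcal{B}_\gamma$.
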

\noindent
\textbf{Proof.} The proof is given in Appendix \uppercase\expandafter{\romannumeral 3}.\qed

	\vspace{-0.5em}
\section{Closed-loop Analysis of the Proposed MPC Framework}
\qs{In the previous section,
	Algorithm~2 provides a distributed solver for Problem~1 at each prediction time $t_k$.
Then, the recursive feasibility of the proposed MPC framework and the asymptotic consensus of the closed-loop multi-agent system will be analyzed in this section.
}

\begin{theorem}\label{feasible}
Suppose that Problem~1 is feasible at the initial time $t_0$, \qs{and the optimal solution of Problem~1 is employed at each prediction time.} Then, the recursive feasibility of the proposed MPC framework is guaranteed and the agents achieve asymptotic consensus while satisfying constraints if weighting matrices $Q_i>\mathbf{0}$, $R_i>\mathbf{0}$, $P_i>\mathbf{0}$, and the following conditions are satisfied, $\forall i\in\{1,2,\ldots,N\}$,
	\\
	(1)	there exists a control gain matrix $K_i$ such that
	{\small\begin{align}
				\sum_{h=0}^{\Delta t -1}\|(A_i&+B_iK_i)^{h}\|^2_{Q_i+K_i^{\top}R_iK_i}	\label{Ki}\\& + \|(A_i+B_iK_i)^{\Delta t}\|_{P_i}^2
				-P_i\leq \mathbf{0};\nonumber
		\end{align}}%
	(2)	\qs{the terminal set $\mathcal{X}_{i\mathcal{T}}$ and the admissible equilibrium set 
	$
	\tilde{\mathcal{Z}}_i$ satisfy}
		\small{\begin{align}
			&\qs{K_i\mathcal{X}_{i\mathcal{T}}\oplus (D_i-K_i)\tilde{\mathcal{Z}}_i\subseteq {\mathcal{U}}_i,}\nonumber\\
			&
	\qs{	(A_i+B_iK_i)\mathcal X_{i\mathcal T}
				\oplus
				(I-(A_i+B_iK_i))\tilde{\mathcal{Z}}_i
				\subseteq
				\mathcal X_{i\mathcal T}.}
		\label{XTi}
		\end{align}}
\end{theorem}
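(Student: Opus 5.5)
The argument follows the standard MPC pattern: a shifted-and-extended candidate gives recursive feasibility, and the optimal value serves as a Lyapunov-like function.

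\textbf{Step 1: candidate solution.} Let $(\mathbf u_i^\star(t_k), z_i^\star(t_k))$ be the optimal solution at $t_k$. At $t_{k+1}=t_k+\Delta t$, keep the same equilibrium $\tilde z_i=z_i^\star(t_k)$. This preserves \eqref{LZ} and \eqref{zlabel} and gives $u_i^e=D_i\tilde z_i$. For the input sequence, shift: set $\tilde u_i(l|t_{k+1})=u_i^\star(l+\Delta t|t_k)$ for $l\in[0,\mathcal T-\Delta t-1]$. For the last $\Delta t$ steps, append the local feedback
\[
\tilde u_i(l|t_{k+1})=K_i\tilde x_i(l|t_{k+1})+(D_i-K_i)\tilde z_i .
\]
Because the applied inputs coincide with the optimal ones, the shifted states coincide with the old predicted states. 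In particular, $\tilde x_i(\mathcal T-\Delta t|t_{k+1})=x_i^\star(\mathcal T|t_k)\in\mathcal X_{i\mathcal T}$.

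\textbf{Step 2: feasibility of the appended segment.} Write $\bar A_i=A_i+B_iK_i$. Using $(I-A_i)\tilde z_i=B_iD_i\tilde z_i$, which follows from the definition of $\tilde{\mathcal Z}_i$, the state update under the feedback is
\[
x^+=\bar A_i x+(I-\bar A_i)\tilde z_i ,
\]
so the deviation $x-\tilde z_i$ evolves as $\bar A_i(x-\tilde z_i)$. The second inclusion in \eqref{XTi} shows that $\mathcal X_{i\mathcal T}$ is invariant under this update. The first inclusion shows that the corresponding input lies in $\mathcal U_i$. Since $\mathcal X_{i\mathcal T}\subseteq\mathcal X_i$, state constraints hold along the appended segment. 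Inducting over the $\Delta t$ appended steps gives feasibility of the candidate, and hence recursive feasibility from feasibility at $t_0$.

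\textbf{Step 3: cost decrease.} Let $V(t_k)=\mathbf J(\mathbf U^\star(t_k),\mathbf Z^\star(t_k))$. Compare the candidate cost with $V(t_k)$.
\begin{itemize}
\item The shared middle stage terms cancel.
\item The first $\Delta t$ stage terms of the old solution are removed.
\item The old terminal cost $\|e\|_{P_i}^2$, with $e=x_i^\star(\mathcal T|t_k)-\tilde z_i$, is replaced by the appended stage costs plus a new terminal cost.
\end{itemize}
Along the appended segment the deviations are $\bar A_i^h e$ and the input deviations are $K_i\bar A_i^h e$. The replacement therefore equals
\[
e^\top\Big[\sum_{h=0}^{\Delta t-1}\|\bar A_i^h\|^2_{Q_i+K_i^\top R_iK_i}+\|\bar A_i^{\Delta t}\|_{P_i}^2\Big]e ,
\]
which is at most $\|e\|_{P_i}^2$ by \eqref{Ki}. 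Since the optimum is no worse than the candidate,
\[
V(t_{k+1})\le V(t_k)-\sum_i\sum_{l=0}^{\Delta t-1}\Big(\|x_i^\star(l|t_k)-z_i^\star(t_k)\|_{Q_i}^2+\|u_i^\star(l|t_k)-u_i^e(t_k)\|_{R_i}^2\Big).
\]

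\textbf{Step 4: convergence, the main obstacle.} Since $V\ge0$ is nonincreasing, the summed stage costs are summable, so $\|x_i(t_k)-z_i^\star(t_k)\|_{Q_i}\to0$ with $Q_i>0$. The consensus constraint gives $z_i^\star(t_k)=z_j^\star(t_k)$ at every $k$, so $\|x_i(t_k)-x_j(t_k)\|\to0$. It remains to cover intermediate times $t_k<t<t_{k+1}$. Here the $l<\Delta t$ terms in the decrease are also summable, so the deviations vanish along those times as well.

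The subtle point is that $z^\star(t_k)$ may drift from one update to the next. The agreement statement, however, only needs the common equilibrium at each time, so $\|x_i-x_j\|\le\|x_i-z^\star\|+\|x_j-z^\star\|\to0$ closes the argument. If convergence of the state itself to a fixed consensus point is also desired, I would additionally argue, using compactness of $\tilde{\mathcal Z}_i$ intersections, that $z^\star(t_k)$ has limit points. I would not claim more than asymptotic agreement.
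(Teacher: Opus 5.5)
Your proposal is correct and follows essentially the same route as the paper's proof. It uses the same shifted candidate with the fixed equilibrium $\bar z_i(t_{k+1})=z_i^\star(t_k)$ and the appended feedback $K_i(\bar x_i-\bar z_i)+D_i\bar z_i$, obtains feasibility from the two inclusions in \eqref{XTi}, obtains the cost decrease from \eqref{Ki}, and proves consensus via summability plus the triangle inequality through the common $z^\star(t_k)$. Your closing aside about limit points of $z^\star(t_k)$ is not needed for the theorem and would require an extra boundedness assumption, since $\tilde{\mathcal Z}_i$ is only assumed closed and convex, so you are right not to rely on it.
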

\noindent
\textbf{Proof.} The proof is given in Appendix \uppercase\expandafter{\romannumeral 4}.\qed

\begin{remark}
	\qs{The proposed MPC framework differs from the standard MPC formulation 
	with a fixed equilibrium or a prespecified reference, since the consensus 
	equilibrium $z_i(t_k)$ is also optimized online and must constitute a 
	feasible equilibrium for heterogeneous agents. As a result, the 
	terminal conditions must be imposed uniformly with respect to the 
	admissible equilibrium set $\tilde{\mathcal{Z}}_i$. This motivates the 
	terminal conditions in Theorem~\ref{feasible}, which guarantee recursive 
	feasibility and asymptotic consensus of the proposed MPC framework.}
\qs{Moreover, the closed-loop guarantees in
	Theorem~\ref{feasible} are established under the assumption
	that Problem~1 is solved to optimality at each prediction time $t_k$.
	In practical implementation, Algorithm~2 is terminated after a
	finite number of iterations according to the prescribed
	stopping criteria, and the resulting solution is generally
	approximate. A rigorous characterization of the closed-loop
	properties under such finite optimization errors is beyond the
	scope of this paper.}
\end{remark}

\begin{remark}
	Due to the high flexibility in heterogeneous system dynamics, the terminal conditions (\ref{Ki})-(\ref{XTi}) are intentionally broad to accommodate various scenarios, making it challenging to prescribe a universal design method. For controllable systems, a possible construction procedure is described below. As for the condition (\ref{Ki}), one can set $P_i$ and $K_i$ based on the algebraic Riccati equation. Specifically, set the weighting matrix $P_i$ as the positive definite solution of the algebraic Riccati equation $
			P_i = A_i^\top P_i A_i - A_i^\top P_i B_i (R_i + B_i^\top P_i B_i)^{-1} B_i^\top P_i A_i + Q_i$,
		and design $K_i$ as the optimal LQR gain $
			K_i = - (R_i + B_i^\top P_i B_i)^{-1} B_i^\top P_i A_i$.
		Then, solve the Lyapunov equation
	$
				(A_i+B_iK_i)^{\top}{S}_i(A_i+B_iK_i)-{S}_i=-Q_i
$
		to obtain the positive definite matrix ${S}_i$. Subsequently, choose an admissible equilibrium set
		$
		\tilde{\mathcal{Z}}_i$, design $\mathcal{X}_{i\mathcal{T}}\subseteq\mathcal{S}_i(\beta_i)\ominus(-\tilde{\mathcal{Z}}_i)$ with $\mathcal{S}_i(\beta_i)=\{e_i|e_i^{\top}{S}_ie_i\leq\beta_i\}$, and set the scalar parameter $\beta_i > 0$ such that $\mathcal{X}_{i\mathcal{T}} \subseteq \mathcal{X}_i$. 
		Finally, verify whether the conditions in (\ref{XTi}) are satisfied. If not satisfied, change $\tilde{\mathcal{Z}}_i$ within the bounds of $\mathcal{X}_i$ and re-adjust $\beta_i$ accordingly.
	\qs{The above construction adapts the standard terminal design
		procedure for MPC with a prescribed equilibrium (e.g.,
		[23]) to the present framework where the terminal
		conditions must hold uniformly over the admissible equilibrium
		set $\tilde{\mathcal Z}_i$.}
	\qs{A possible extension of the terminal conditions to nonlinear systems would require
		analogous nonlinear terminal ingredients, which is left as an
		important direction for future research.}
\end{remark}

\begin{remark}
\qs{In the proposed MPC framework, the optimized consensus state $z_i^\star(t_k)$ is updated at each prediction time. This online re-optimization brings a trade-off between control performance and computational complexity. Compared with MPC methods with a pre-specified equilibrium, whose local decision variable is mainly the control input sequence $\mathbf u_i(t_k)\in\mathbb R^{m\mathcal T}$, the proposed formulation optimizes $(\mathbf u_i(t_k),z_i(t_k))\in\mathbb R^{m\mathcal T+n}$, resulting in additional inner iterations for the multi-agent system at each prediction time. 
To reduce the resulting computation burden in practical implementations, one can adopt a two-stage strategy. In the first stage, Algorithm~2 is executed to jointly optimize the control input sequence and the consensus equilibrium. Once the consensus errors become sufficiently small, e.g., at some time $t_{k'}$ satisfying $\sum_{j\in\mathcal{N}_i}\|x_i(t_{k'})-x_j(t_{k'})\|\leq\mu_i$ for a given threshold $\mu_i>0$, the optimal consensus state $z_i^\star(t_{k'})$ can be fixed as a reference consensus target for each agent. In the second stage, each agent can employ a  decentralized MPC strategy to track this fixed reference locally \cite[Chapter 2]{Rawlings2017}. This strategy reduces computational burden by avoiding unnecessary repeated optimization procedure in the final stage. }
\end{remark}

\begin{remark}
\qs{Integrator-type systems constitute a practically important
	subclass of linear systems, which are widely used in applications
	such as unmanned aerial vehicles~\cite{Lyu2021}, smart
	sensors~\cite{Zhang2015sensor}, and spacecraft
	rotation~\cite{Ren2008}.  The terminal conditions in
	Theorem~2 admit a simplified form for such systems.
	Specifically, consider an $r$th-order
	integrator-type agent described by}
	\\
	\vspace{-1em}
{\small	\qs{\begin{align}
				A_i=\begin{bmatrix}
						1  & \qs{\eta}_i & 0 & \cdots& 0  \\
						0  & 1 &\qs{\eta}_i &\cdots &0\\
						\vdots & \vdots & \ddots & \ddots &\vdots\\
						0 & 0 & 0  &1 &\qs{\eta}_i\\
						0 & 0 & 0  &0 &1
					\end{bmatrix}\otimes I_d,~~
				B_i=\begin{bmatrix}
						0\\0\\\vdots\\0 \\ \qs{\kappa}_i
					\end{bmatrix}\otimes I_d,\label{inter}          
			\end{align}}}%
	\qs{where $\eta_i>0$ and $\kappa_i>0$ denote the
	state-transition and input-channel coefficients, respectively.
	For this class of systems, the pair $(A_i,B_i)$ is controllable,
	and every equilibrium has the form
	$
	z_i=\operatorname{col}
	\{\alpha_i,\mathbf 0_{(r-1)d}\}$,
	$
	u_i^e=\mathbf 0_d$,
	where $\alpha_i\in\mathbb R^d$. Hence, provided that
	$\mathbf 0_d\in\mathcal U_i$, an admissible equilibrium set can
	be selected as
	$
	\tilde{\mathcal Z}_i
	\subseteq
	\left\{
	\operatorname{col}
	\{\alpha_i,\mathbf 0_{(r-1)d}\}
	\in\mathcal X_i
	\right\}
	$, and the first condition in
	\eqref{XTi} reduces to
	$
	K_i\bigl(
	\mathcal X_{i\mathcal T}
	\oplus(-\tilde{\mathcal Z}_i)
	\bigr)
	\subseteq
	\mathcal U_i$.
	The remaining terminal ingredients can then be constructed following the procedure in Remark~7 to satisfy the
	terminal conditions.}
\end{remark}

\vspace{-1 em}
\section{Simulation Examples}
In this section, the proposed method will be utilized in different scenarios to verify its effectiveness.

\vspace{-1em}
\subsection{Heterogeneous Linear Systems }
{First, consider a multi-agent system with five heterogeneous agents.}
{The communication topology is shown in Fig. \ref{topo}, and
	the dynamics of each agent are linear systems (\ref{dyn}) with}
{\small\begin{align*}
		{	A_i=	\begin{bmatrix}
				0& 1 &0 \\
				0& 0 &1 \\
				\Delta_i^1&\Delta_i^2  &\Delta_i^3
			\end{bmatrix},~~B_i=\begin{bmatrix}
				0\\0\\1
		\end{bmatrix}}
\end{align*}}%
{and $\Delta_i^j=\operatorname{rand}(-1,1)$, $\forall i\in\{1,2,3,4,5\}$, $\forall j\in\{1,2,3\}$. The state and input constraints of the
	agents are $\mathcal{X}_i=\big\{x_i=[x_{i\{1\}},x_{i\{2\}},x_{i\{3\}}]^{\top}\big| [-6,-6,-6]^{\top}\leq x_i\leq  [6,6,6]^{\top}\big\}$, $\mathcal{U}_i=\big\{u_i\big| -3\leq u_i\leq 3\big\}$, $i\in\{1,2,3,4,5\}$.}

\vspace{-0.5em}
\begin{figure}[!htbp]
	\centering
	\includegraphics[width=0.28 \linewidth]{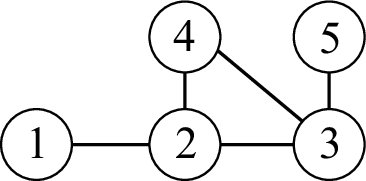}
	\caption{Communication topology of agents}
	\label{topo}
\end{figure}

\vspace{-0.5em}
\vspace{-0.5em}
The prediction horizon and control period are chosen as $\mathcal{T}=8$ and $\Delta t=2$, and the weighting matrices are set as $R_i=0.1$ and $Q_i=0.1I_3$.
\qs{Following the construction procedure discussed in Remark~7,
	for each agent, the terminal weighting matrix $P_i$ is obtained by the algebraic Riccati equation, and the feedback gain $K_i$ is computed accordingly. The matrix $S_i>\mathbf{0}$ is then obtained from
	the associated Lyapunov equation. Then, set the admissible equilibrium as 
	$\tilde{\mathcal Z_i}
	=
	\left\{
	z_i=\xi_i\mathbf1_3
	\mid
	-0.5\leq\xi_i\leq0.5
	\right\}$, and design the terminal set as
	\(
	\mathcal X_{i\mathcal T}
	=
	\left\{
	x_i
	\,\middle|\,
	x_i^\top S_i x_i\leq r_{i\mathcal T}^{\,2}
	\right\}\subseteq\mathcal S_i(\beta_i)
	\ominus
	\bigl(-\tilde{\mathcal Z}_i\bigr),
	\)
	where
	$\mathcal S_i(\beta_i)
	=\{e_i\mid e_i^\top S_i e_i\leq\beta_i\}$, where the parameters $r_{i\mathcal T}>0$ and
	$\beta_i>0$ are jointly selected
	such that the above inclusion and
	$\mathcal X_{i\mathcal T}\subseteq\mathcal X_i$ hold.
	The heterogeneous system parameters $\Delta_i=[\Delta_i^1,\Delta_i^2,\Delta_i^3]$ and selected terminal parameters
	$\beta_i$, $r_{i\mathcal T}$ are summarized in Table~I.
}%
The proposed Algorithm~1 is applied with the initial state $x_{i\{1\}}(t_0)=\operatorname{rand}(-6,6)$, $x_{i\{2\}}(t_0)=\operatorname{rand}(-6,6)$, $x_{i\{3\}}(t_0)=\operatorname{rand}(-6,6)$, \qs{and Algorithm~2 is used as the distributed solver with parameters} $g_{u_i}=g_{z_i}=\frac{1}{200}$, $\rho = 1$, $\mathbf{u}_i^0=\mathbf{0}$, ${z}_i^0(t_k)=x_i(t_k)$.

\begin{table}[!htbp]
	\centering
	\caption{\qs{Record of parameters $\Delta_i$, ${r}_{i\mathcal{T}}$ and $\beta_i$}}\label{table1}
	{
		\setlength{\tabcolsep}{1.5pt}
		\begin{tabular}{@{}cccccc@{}}
			\toprule
			{Agent} & {1} & {2} & {3} & {4} & {5} \\
			\midrule
			{ $\Delta_i$}
			&
			{\scriptsize $[0.4,0.2,0.3]$}
			&
			{\scriptsize  $[0.3,0.3,0.4]$}
			&
			{\scriptsize  $[0.5,0.4,0.3]$}
			&
			{\scriptsize  $[0.5,0.2,0.3]$}
			&
			{\scriptsize  $[0.3,0.4,0.4]$}
			\\
			{${r}_{i\mathcal{T}}$} & {$1.9227$} & {$1.9112$} & {$1.7141$} & {$1.9251$} & {$1.9108$} \\
			{$\beta_i$} & {$5.3591$} & {$5.3053$} & {$4.4463$} & {$5.3750$} & {$5.3063$} \\
			\bottomrule
	\end{tabular}}
\end{table}


\qs{Fig.~\ref{updatex} illustrates the system state trajectories and control inputs of agents. The state trajectories of all agents
	are retained in the first main panel to demonstrate their convergence
	to an optimized consensus equilibrium, while the inset
	shows the three state components of Agent~1 to make its transient
	behavior easily distinguished.}
Simulation results show that all agents quickly reach consensus, with both state variables and control inputs remaining within their prescribed constraints throughout the process. Notably, due to the heterogeneous dynamics, the control inputs converge to distinct nonzero values to maintain the consensus equilibrium.

\vspace{-0.5em}
\begin{figure}[!hbtp]
	\centering
	\includegraphics[width= 0.95\linewidth]{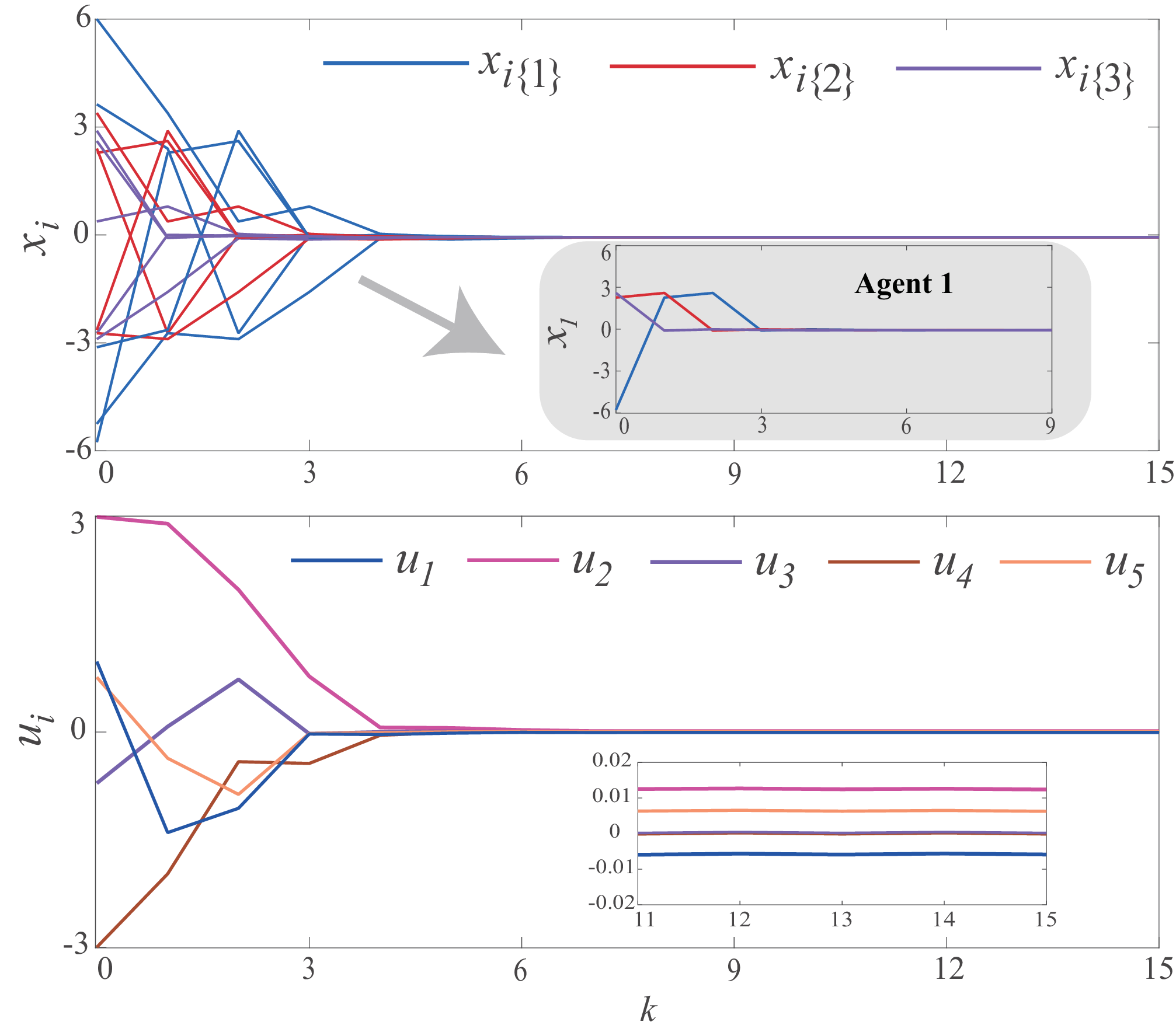}
	\caption{System states and control inputs of agents}
	\label{updatex}
\end{figure}

\vspace{-1.5em}
\subsection{Integrator Systems }
In this subsection, the proposed algorithm is applied to solve the optimal formation problem for a multi-robot system with five robots. Suppose the communication topology is the same as Fig.~\ref{topo}. \qs{Consider each robot modeled by the double-integrator dynamics
	\(
	m_i\ddot p_i(t)=u_i(t),
	\)
	where $m_i>0$, $p_i(t)\in\mathbb R^2$, and
	$u_i(t)\in\mathbb R^2$ denote the mass, position, and control
	force of robot $i$, respectively. Let
	$v_i(t)=\dot p_i(t)$ denote its velocity.
	Then, using a common sampling period $\delta$ and the forward-Euler
	discretization yields}
{\small
	\begin{align}
		\begin{bmatrix}
			p_i(k+1)\\
			v_i(k+1)
		\end{bmatrix}
		&=
		\begin{bmatrix}
			I_2 & \delta I_2\\
			\mathbf 0 & I_2
		\end{bmatrix}
		\begin{bmatrix}
			p_i(k)\\
			v_i(k)
		\end{bmatrix}
		+
		\begin{bmatrix}
			\mathbf 0\\
			\dfrac{\delta}{\qs{m_i}}I_2
		\end{bmatrix}
		u_i(k),
		\label{2order}
\end{align}}%
with $p_i=[p_{x_i},p_{y_i}]^{\top}$, $v_i=[v_{x_i},v_{y_i}]^{\top}$ and $u_i=[u_{x_i},u_{y_i}]^{\top}$.

To complete the formation task, 
let $p_{r_i}\in\mathbb R^2$ denote the relative position of the $i$th robot with respect to the formation center. 
Assume the formation pattern is set as $[p_{r_1}$, $p_{r_2}$, $p_{r_3}$, $p_{r_4}$, $p_{r_5}]=[(-2,2)^{\top},(-2,-2)^{\top},(2,2)^{\top},(2,-2)^{\top},(0,0)^{\top}]$, which forms a square with its center point.  The position,
velocity, and control input constraints are given by
$p_i\in\mathcal X_i=\{p_i\mid[-10,-10]^\top\leq
p_i\leq[10,10]^\top\}$,
$v_i\in\mathcal V_i=\{v_i\mid[-3,-3]^\top\leq
v_i\leq[3,3]^\top\}$, and
$u_i\in\mathcal U_i=\{u_i\mid[-3,-3]^\top\leq
u_i\leq[3,3]^\top\}$, respectively.
\qs{To represent the formation task within the consensus framework of Problem~1, define the shifted state
	$\tilde{x}_i=	\operatorname{col}\{\tilde p_i,v_i\}=\operatorname{col}\{p_i-p_{r_i},v_i\}$.
	It is verified that $\tilde{x}_i$ satisfies the
	same dynamics as in \eqref{2order}, and its equilibrium is defined as
	$z_i=\operatorname{col}\{c_i,\mathbf 0_2\}$, where $c_i$
	denotes the formation center. 
	Thus, consensus of $z_i$ yields a common formation center
	while preserving the prescribed relative positions $p_{r_i}$.
	The corresponding
	shifted state constraint set is
	\(
	\tilde{\mathcal X}_i
	=
	\left\{
	\operatorname{col}\{\tilde p_i,v_i\}
	\mid
	\tilde p_i+p_{r_i}\in\mathcal X_i,\;
	v_i\in\mathcal V_i
	\right\}.
	\)
	The cost function at each prediction time $t_k$ is then written as $J_i=\sum_{l=0}^{\mathcal{T}-1}\big(\|\tilde x_i(l|t_k)-z_i(t_k)\|_{Q_i}^2+\|u_i(l| t_k)\|^2_{R_i}\big)+\|\tilde x_i(\mathcal{T}| t_k)-z_i(t_k)\|_{P_i}^2$.
}

\qs{The common sampling period is selected as
	$\delta=0.5$. The nominal mass of each robot is
	$m_0=1$, while its actual mass is set as
	\(	
	m_i=m_0(1+\Delta m_i)\),
	\(\Delta m_i=\operatorname{rand}(-0.1,0.1).
	\)
	Then, set the prediction horizon as $\mathcal{T}=10$ and control period $\Delta t =1$. 
	The weighting matrices are set as $R_i=0.1 I_2$, $Q_i= I_4$.
	Following the terminal ingredient construction in Remark~7, the terminal weighting matrix
	$P_i$ and the feedback gain $K_i$ are obtained
	from the algebraic Riccati equation.
	The admissible
	equilibrium set is selected as $\tilde{\mathcal Z_i}
	=
	\left\{
	z_i=\operatorname{col}\{c_i,\mathbf{0}_2\}
	\mid
	-0.16\mathbf{1}_2\leq c_i\leq0.16\mathbf{1}_2
	\right\}$, and the terminal set is designed as 
	$
	\mathcal X_{i\mathcal T}
	=\{\tilde{x}_i\mid\tilde{x}_i^\top S_i \tilde{x}_i\leq r^2_{i\mathcal{T}}\}$.
	Moreover, with
	\(
	\mathcal S_i(\beta_i)
	=
	\left\{
	e_i
	\,\middle|\,
	e_i^\top S_i e_i\leq\beta_i
	\right\},
	\)
	the parameters $\beta_i>0$ and $r_{i\mathcal T}>0$ are jointly
	selected such that
	\(
	\mathcal X_{i\mathcal T}
	\subseteq
	\mathcal S_i(\beta_i)
	\ominus
	\bigl(-\tilde{\mathcal Z}_i\bigr)\) and $\mathcal X_{i\mathcal T}
	\subseteq
	\tilde{\mathcal X}_i$.
	The heterogeneous system parameters $m_i$ and selected terminal parameters $\beta_i$,
	$r_{i\mathcal T}$ are summarized in Table~\ref{table2}.
}
The proposed Algorithm~1 is utilized with the initial system state $p_{x_i}(t_0)$ $=\operatorname{rand}(-10,10)$, $p_{y_i}(t_0)$ $=\operatorname{rand}(-10,10)$, $v_{x_i}(t_0)$ $=\operatorname{rand}(-3,3)$, $v_{y_i}(t_0)$ $=\operatorname{rand}(-3,3)$, \qs{and Algorithm 2 is used as the distributed solver with parameters}
$g_{u_i}=g_{z_i}=\frac{1}{400}$, $\rho = 1$, $\mathbf{u}_i^0=\mathbf{0}$, $\qs{z_i^0(t_k)=
	\operatorname{col}\{p_i(t_k)-p_{r_i},\mathbf 0_2\}}$.
The formation process of robots is displayed in Fig. \ref{formation}, which shows that the formation task is completed.

\vspace{-0.25em}
\begin{table}[hbtp]
	\centering
	\caption{\qs{Record of parameters $m_i$, ${r}_{i\mathcal{T}}$ and $\beta_i$}}
	\label{table2}
	{
		\setlength{\tabcolsep}{6pt}
		\begin{tabular}{cccccc}
			\toprule
			\qs	{Robot} & \qs{1} &\qs {2} &\qs {3} &\qs {4} & \qs{5} \\
			\midrule
			{ $m_i$}
			&
			{$0.9259$}
			&
			{$0.9615$}
			&
			{$1.0000$}
			&
			{$1.0417$}
			&
			{$1.0870$}
			\\
			\qs{${r}_{i\mathcal{T}}$} & \qs{$1.6514$} & \qs{$1.6063$} & \qs{$1.5616$} &\qs {$1.5173$} & \qs{$1.4735$} \\
			\qs{$\beta_i$} & \qs{$4.3608$} &\qs {$4.1755$} &\qs {$3.9959$} & \qs{$3.8223$} & \qs{$3.6544$} \\
			\bottomrule
	\end{tabular}}
\end{table}

\vspace{-0.5em}
\begin{figure}[!htbp]
	\centering
	\includegraphics[width=0.95 \linewidth]{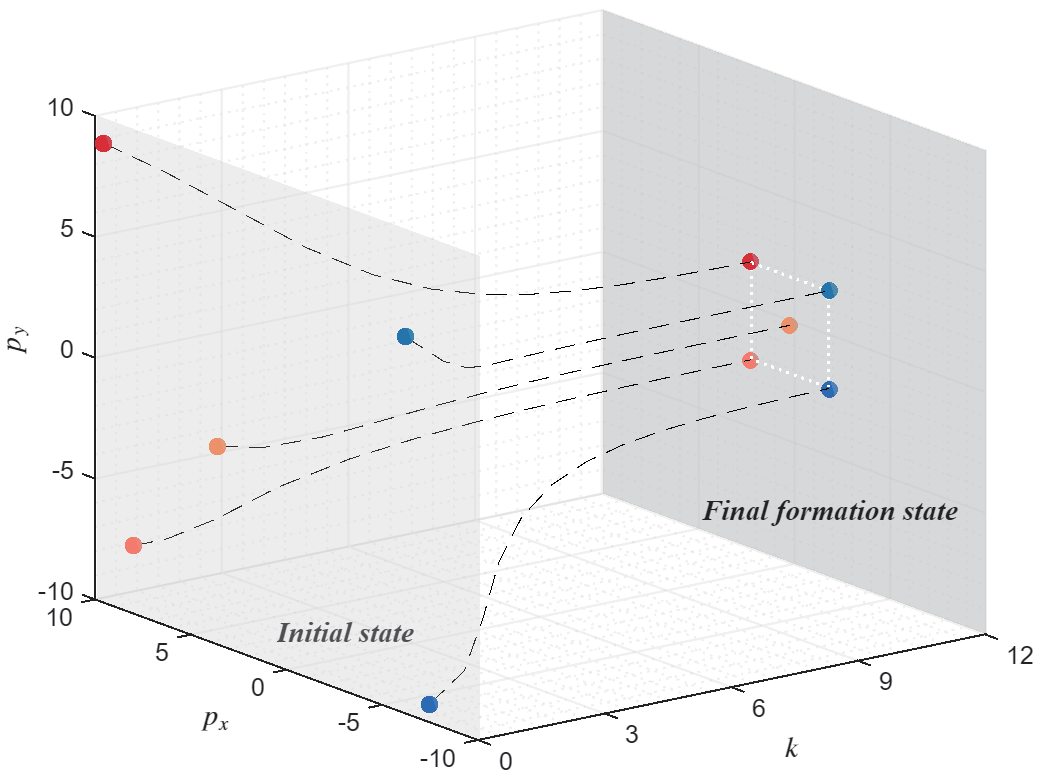}
	\caption{Formation process of robots}
	\label{formation}
\end{figure}

\qs{To further evaluate the closed-loop performance of large-scale multi-agent systems, 
	we consider rendezvous problems with
	$N=50$ and $N=100$ robots, where all desired relative
	positions are set to zero.} For comparison, the
system transformation-based consensus (STBC)
method~\cite{longwang}, the adaptive
feedback gain control (AFGC) method~\cite{Lin2017}, the sequential tracking MPC (STMPC)
method~\cite{Hirche2020}, \qs{and the optimization-based consensus
	MPC (OCMPC) method~\cite{Bai2024TAC} are implemented.
	Since some benchmark methods are developed for homogeneous
	agents, identical dynamics with $m_i=1$ in~\eqref{2order}
	are adopted for all methods.
	The comparison results are reported in
	Table~\ref{compare}, where
	the consensus step
	is defined as the first time instant $\mathscr T$ satisfying
	\(
	\sum_{i=1}^{N}\sum_{j=1}^{N}a_{ij}
	\| \tilde{x}_i(\mathscr T)- \tilde{x}_j(\mathscr T)\|
	\leq 10^{-4},
	\)
	and the performance cost is defined as
	\(
	\bar J
	=
	\sum_{t=0}^{\mathscr T-1}
	\sum_{i=1}^{N}
	[
	\sum_{j=1}^{N}a_{ij}
	(
	\| \tilde{x}_i(t)- \tilde{x}_j(t)\|_{Q_{i}}^2
	)+
	\|u_i(t)\|_{R_i}^2
	].
	\)
}

\vspace{-1em}
\qs{
	The comparison results in Table~\ref{compare} 
	demonstrate the advantages of the proposed approach. Since
	AFGC and STBC do not optimize the consensus equilibrium online, they have fewer degrees of
	freedom for improving the control performance, as reflected in the additional steps required to reach consensus and higher performance costs.
	STMPC introduces artificial targets to be optimized, but agreement among local targets is
	promoted through soft disagreement penalties. Hence, the
	optimized local targets may remain different during the MPC
	updates, potentially leading to slower convergence than that
	achieved by the proposed method and OCMPC, both of which
	enforce a hard consensus constraint.
	Although OCMPC achieves performance comparable to that of the
	proposed method in the considered examples, it is restricted
	to homogeneous agents and
	does not explicitly incorporate feasible equilibrium
	conditions. In contrast, the proposed method is applicable to more general constrained
	heterogeneous agents and ensures that the optimized common
	equilibrium is dynamically feasible for every agent, thereby
	providing a broader scope of applicability.}

\begin{table}[!hbtp]
	\centering
	\caption{\qs{Comparison results of different methods}}
	\label{compare}
	{
		\footnotesize
		\setlength{\tabcolsep}{4pt}
		\begin{tabular}{cccc}
			\toprule
			Method & Number of Robots & Consensus Step & Performance Cost\\
			\midrule
			Proposed & $50$  & $26$  & $3.6616\times10^4$\\
			OCMPC    & $50$  & $26$  & $3.6648\times10^4$\\
			STMPC    & $50$  & $68$  & $3.5440\times10^5$\\
			STBC     & $50$  & $87$  & $3.3821\times10^5$\\
			AFGC     & $50$  & $143$ & $5.1211\times10^5$\\
			\midrule
			Proposed & $100$ & $27$  & $4.6292\times10^4$\\
			OCMPC    & $100$ & $27$  & $4.6309\times10^4$\\
			STMPC    & $100$ & $87$  & $4.0371\times10^5$\\
			STBC     & $100$ & $127$ & $3.5506\times10^5$\\
			AFGC     & $100$ & $169$ & $5.7765\times10^5$\\
			\bottomrule
		\end{tabular}
	}
\end{table}%

\vspace{-0.5em}
\section{Conclusion}
\qs{In this paper, the optimal consensus control problem of constrained heterogeneous multi-agent systems is addressed within an MPC framework. In contrast to the classical MPC method with predefined equilibrium, 
both the control input and the consensus equilibrium are optimized in the MPC optimization problem. A distributed primal-dual algorithm is proposed to solve this optimization problem, where the control input sequence and consensus equilibrium are updated in parallel. Subsequently, the closed-loop property of the proposed MPC framework is analyzed, deriving sufficient conditions to ensure the recursive feasibility and asymptotic consensus of the multi-agent systems.} Finally, numerical simulations verify the validity of the proposed approach.

\appendices

\section{Proposition 1}
\begin{proposition}
	For the $i$th agent, the iteration process in (\ref{A1})-(\ref{A2}) can be implemented as 
	\begin{subequations}
		{	\begin{align}
				\mathbf{u}_i^{q+1}&=\mathscr{P}_{\tilde{\mathcal{U}}_i}\Big\{\operatorname{col}\big[\hat{\omega}_{u_i}^q(0),\hat{\omega}_{u_i}^q(1),\ldots,\hat{\omega}_{u_i}^q(\mathcal{T}-1)\big]\Big\},
				\label{disA1}\\
				{z}_i^{q+1}&=\mathscr{P}_{{\mathcal{Z}}_i}\big\{z_i^q-g_{z_i}F_{z_{ij}}^q\big\},\label{disA2}
		\end{align}}%
		\label{dis}
	\end{subequations}
	where 
	$\hat{\omega}_{u_i}^q(l)=u_i^q(l)-g_{u_i}F_{u_i}^q(l)$,
{	\begin{align}
			F_{u_i}^q(l)=&2\Big[\sum_{h=l+1}^{\mathcal{T}-1}(A_i^{h-l-1}B_i)^{\top} Q_i H_i^q(h)\nonumber\\
			&~+(A_i^{\mathcal{T}-l-1}B_i)^{\top} P_i H_i^q(\mathcal{T})+R_i (u_i^q(l)-D_iz_i^q)\Big],\nonumber\\
			H_i^q(l)=&A_i^lx_i(0)+\sum_{j=0}^{l-1}A_i^{l-1-j}B_iu_i^q(j)-z_i^q,\label{disFH}\\
			F_{z_{ij}}^q=-2&\Big[\sum_{l=0}^{\mathcal{T}-1}\big(Q_i(x_i^q(l)-z_i^q) +D_i^{\top}R_i(u_i^q(l)-D_iz_i^q)\big)\nonumber\\
			+&P_i(x_i^q(\mathcal{T})-z_i^q)\Big]+\sum_{j\in\mathcal{N}_i}\big[\rho(z_i^q-z_j^q)+(\lambda_i^q-\lambda_j^q)\big]\nonumber
	\end{align}}
	and $D_i=(B_i^{\top}B_i)^{-1}B_i^{\top}(I_n-A_i)$.
\end{proposition}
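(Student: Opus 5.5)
The plan is to prove Proposition~1 by explicitly computing the two gradients $\nabla_{\mathbf{u}_i}J_i$ and $\nabla_{z_i}J_i$ appearing in \eqref{A1}--\eqref{A2}. Substituting them into the projected gradient steps should recover \eqref{disA1}--\eqref{disA2} verbatim. No convergence argument is required; the statement is purely an unrolling of the update rules using the linear prediction model \eqref{dlabel} and the equilibrium parametrization $u_i^e=D_iz_i$.

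The first step eliminates the predicted states. Iterating \eqref{dlabel} from $x_i(0)=x_i(t_k)$ gives
\begin{align*}
x_i(l)=A_i^lx_i(0)+\sum_{j=0}^{l-1}A_i^{l-1-j}B_iu_i(j).
\end{align*}
Hence the deviation $x_i(l)-z_i$ evaluated at $(\mathbf{u}_i^q,z_i^q)$ is exactly $H_i^q(l)$ in \eqref{disFH}. The cost \eqref{cost} then becomes an explicit quadratic function of $(\mathbf{u}_i,z_i)$:
\begin{align*}
J_i=\sum_{h=0}^{\mathcal{T}-1}\big(\|H_i(h)\|_{Q_i}^2+\|u_i(h)-D_iz_i\|_{R_i}^2\big)+\|H_i(\mathcal{T})\|_{P_i}^2 .
\end{align*}

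The second step differentiates $J_i$ with respect to the block $u_i(l)$. The only $h$ with $\partial H_i(h)/\partial u_i(l)\neq 0$ are $h\geq l+1$, and for those $\partial H_i(h)/\partial u_i(l)=A_i^{h-l-1}B_i$. The chain rule then gives $2\sum_{h=l+1}^{\mathcal{T}-1}(A_i^{h-l-1}B_i)^{\top}Q_iH_i(h)$ from the stage terms and $2(A_i^{\mathcal{T}-l-1}B_i)^{\top}P_iH_i(\mathcal{T})$ from the terminal term. The input penalty contributes $2R_i(u_i(l)-D_iz_i)$. Summing these three pieces yields $F_{u_i}^q(l)$. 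Stacking over $l$, the argument of the projection in \eqref{A1} is $\operatorname{col}\{\hat\omega_{u_i}^q(l)\}$, which gives \eqref{disA1}.

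The third step treats $z_i$. Since $\partial H_i(h)/\partial z_i=-I$ and $\partial(u_i(h)-D_iz_i)/\partial z_i=-D_i$, we obtain
\begin{align*}
\nabla_{z_i}J_i=-2\Big[\sum_{l=0}^{\mathcal{T}-1}\big(Q_i(x_i(l)-z_i)+D_i^{\top}R_i(u_i(l)-D_iz_i)\big)+P_i(x_i(\mathcal{T})-z_i)\Big].
\end{align*}
Adding the neighbor terms already present in \eqref{A2} gives $F_{z_{ij}}^q$, and hence \eqref{disA2}, with the projection set understood as $\tilde{\mathcal{Z}}_i$.

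The main point needing care is the index bookkeeping in step two, namely which $H_i(h)$ depend on $u_i(l)$ and the resulting power of $A_i$. Before writing the general formula, I would check it first on the boundary case $l=\mathcal{T}-1$, where the stage sum is empty and only the terminal term with $A_i^0B_i$ remains. Everything else is a routine chain-rule computation.
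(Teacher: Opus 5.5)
Your proposal is correct and follows essentially the same route as the paper: the paper omits the proof as a direct substitution of the prediction model into \eqref{A1}--\eqref{A2}, and Appendix~II writes out the same gradients $\nabla_{\mathbf{u}_i}J_i$ and $\nabla_{z_i}J_i$ that you derive. Reading the projection set in \eqref{disA2} as $\tilde{\mathcal{Z}}_i$ is also the right way to handle the notational slip in the statement.
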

\noindent
\textbf{Proof. } This conclusion is derived by substituting the dynamic model \eqref{dyn} into the update processes \eqref{A1}-\eqref{A2}, so the proof is omitted for brevity.\qed

\section{Proof of Theorem 1}
Before the proof begins, some useful lemmas are given.

\begin{lemma}
	(\cite{POLYAK1987}) On a closed convex set $\mathbb{Q}$, the projection of $x\in\mathbb{R}^n$ possesses the following property:
	\begin{align*}
		\langle x-\mathscr{P}_{\mathbb{Q}}(x), y-\mathscr{P}_{\mathbb{Q}}(x)\rangle~\le~ 0, \quad \forall{x}\in\mathbb{R}^n,{y}\in \mathbb{Q}.
	\end{align*}
\end{lemma}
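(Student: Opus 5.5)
This is the standard variational characterization of the projection onto a closed convex set, so I would prove it directly from the minimizing property rather than from anything specific to the paper. Write $p=\mathscr{P}_{\mathbb{Q}}(x)$. By definition, $p$ is the minimizer over $\mathbb{Q}$ of $f(w)=\tfrac12\|w-x\|^2$. Existence follows because $\mathbb{Q}$ is closed and nonempty and $f$ is coercive and continuous. Uniqueness follows from strict convexity of $f$. Hence $p\in\mathbb{Q}$ is well defined.

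Fix any $y\in\mathbb{Q}$ and, for $t\in(0,1]$, consider the segment point $w_t=p+t(y-p)$. It lies in $\mathbb{Q}$ by convexity, which is the one place convexity is used in the inequality itself. Minimality of $p$ gives $f(w_t)\ge f(p)$. Expanding,
\begin{align*}
\tfrac12\|p-x+t(y-p)\|^2-\tfrac12\|p-x\|^2 = t\langle p-x,\,y-p\rangle+\tfrac{t^2}{2}\|y-p\|^2\ \ge 0 .
\end{align*}
Next I would divide by $t>0$ and let $t\to0^+$. This yields $\langle p-x,y-p\rangle\ge0$, which is $\langle x-p,\,y-p\rangle\le0$, the claimed inequality.

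There is no real obstacle here. The only point requiring care is ensuring $w_t\in\mathbb{Q}$, which convexity provides, and handling the limit via the first-order term, since the quadratic term vanishes faster. Alternatively, one can phrase this step as the first-order optimality condition $\langle\nabla f(p),y-p\rangle\ge0$ for a convex differentiable function over a convex set, with $\nabla f(p)=p-x$. In the paper this lemma will then be applied with $x$ the pre-projection gradient steps in \eqref{A1}--\eqref{A2} and $y$ the optimal $(\mathbf{u}_i^\star,z_i^\star)$, which lie in $\tilde{\mathcal{U}}_i$ and $\tilde{\mathcal{Z}}_i$.
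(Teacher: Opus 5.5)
Your proof is correct. The paper gives no proof of this lemma and only cites \cite{POLYAK1987}, and your argument is the standard variational one: perturb along $p+t(y-p)\in\mathbb{Q}$, expand the squared distance, divide by $t>0$, and let $t\to0^+$.

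One small correction to your closing aside: Appendix~II applies the lemma blockwise per agent with an arbitrary $\mathbf{U}\in\tilde{\mathcal{U}}$, $\mathbf{Z}\in\tilde{\mathcal{Z}}$, not only at the optimum, and Corollary~1 relies on exactly that generality.
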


\begin{lemma}
	(\cite{Drori2015}) For any continuously differentiable convex function $f$ on $\mathbb{R}^p$ with $L_g$-Lipschitz continuous gradient $\nabla f$, one has $\forall{x,y,z\in\mathbb{R}^p}$,
	\begin{align*}
		f(x)\le ~f(y)+\langle\nabla f(z),x-y\rangle+~\frac{L_g}{2}\|x-z\|^2.
	\end{align*}
\end{lemma}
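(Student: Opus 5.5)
The plan is to combine two standard inequalities that both use the gradient at the same intermediate point $z$: the quadratic upper bound (descent lemma) implied by $L_g$-Lipschitz continuity of $\nabla f$, and the first-order characterization of convexity. The inequality to prove is exactly their sum.

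\textbf{Step 1 (upper bound from smoothness).} For any $x,z\in\mathbb{R}^p$, write
\begin{align*}
f(x)-f(z)=\int_0^1\langle\nabla f(z+s(x-z)),x-z\rangle\,ds .
\end{align*}
Subtract $\langle\nabla f(z),x-z\rangle$ from both sides. Bound the resulting integrand by Cauchy--Schwarz and the Lipschitz estimate
\begin{align*}
\|\nabla f(z+s(x-z))-\nabla f(z)\|\le L_g s\|x-z\|.
\end{align*}
Integrating $L_g s\|x-z\|^2$ over $s\in[0,1]$ then gives
\begin{align*}
f(x)\le f(z)+\langle\nabla f(z),x-z\rangle+\frac{L_g}{2}\|x-z\|^2 .
\end{align*}

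\textbf{Step 2 (lower bound from convexity).} Since $f$ is convex and differentiable, for any $y,z$ we have
\begin{align*}
f(y)\ge f(z)+\langle\nabla f(z),y-z\rangle .
\end{align*}
Rearranged, this reads $f(z)\le f(y)+\langle\nabla f(z),z-y\rangle$.

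\textbf{Step 3 (combine).} Substitute the bound from Step 2 for $f(z)$ into the right-hand side of Step 1. The two inner products merge into
\begin{align*}
\langle\nabla f(z),(z-y)+(x-z)\rangle=\langle\nabla f(z),x-y\rangle ,
\end{align*}
which yields the claimed inequality for all $x,y,z\in\mathbb{R}^p$.

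No step is a real obstacle. The only point needing care is to linearize both inequalities at the same point $z$, so that the terms in $f(z)$ cancel and the gradient terms telescope. Note that the quadratic term depends only on $\|x-z\|$, not on $y$, because $y$ enters only through the convexity inequality, which carries no quadratic term.
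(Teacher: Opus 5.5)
Your proof is correct as written. Combining the descent-lemma bound linearized at $z$ with the convexity gradient inequality at the same point $z$ cancels the $f(z)$ terms and merges the inner products into $\langle\nabla f(z),x-y\rangle$. The paper gives no proof of its own and simply cites \cite{Drori2015}; your argument is the standard proof of this three-point inequality.
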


\noindent
\textbf{Then, the proof of Theorem 1 is presented as follows.}

\vspace{4pt}
\noindent
\textbf{Proof of Theorem 1.} 
Considering the equivalent form of (\ref{cost})
{\small\begin{align}
		J_i(\mathbf{u}_i,z_i)=&~(x_i(0)-z_i)^\top Q_i(x_i(0)-z_i)\nonumber\\
	
		+	&\sum_{k=1}^{\mathcal{T}-1}\Big[(A_i^kx_i(0)+\sum_{j= 0}^{k-1}A_i^{k-1-j}B_iu_i(j)-z_i)^{\top}Q_i\nonumber\\
		&\qquad\quad(A_i^kx_i(0)+\sum_{j= 0}^{k-1}A_i^{k-1-j}B_iu_i(j)-z_i)	\Big]\nonumber\\
	&	+\big(A_i^{\mathcal{T}}x_i(0)+\sum_{j=0}^{\mathcal{T}-1}A_i^{\mathcal{T}-1-j}B_iu_i(j)-z_i\big)^{\top}P_i\nonumber\\
	&~~~~\big(A_i^{\mathcal{T}}x_i(0)+\sum_{j=0}^{\mathcal{T}-1}A_i^{\mathcal{T}-1-j}B_iu_i(j)-z_i\big)\nonumber\\
		+	&\sum_{k=0}^{\mathcal{T}-1}(u_i(k)-D_iz_i)^{\top}R_i(u_i(k)-D_iz_i)
	\nonumber
\end{align}}%
with $D_i=(B_i^{\top}B_i)^{-1}B_i^{\top}(I_n-A_i)$,
then, the gradient of $J_i(\mathbf{u}_i,z_i)$ is
{\small \begin{align}
			\nabla_{z_i}J_i(\mathbf{u}_i,z_i)=&-2\Big[P_i\big(A_i^{\mathcal{T}}x_i(0)
		+\sum_{j=0}^{\mathcal{T}-1}A_i^{\mathcal{T}-1-j}B_iu_i(j)-z_i\big)\nonumber\\
		&+\sum_{k=1}^{\mathcal{T}-1}Q_i\big(A_i^{k}x_i(0)
		+\sum_{j=0}^{k-1}A_i^{k-1-j}B_iu_i(j)-z_i\big)\nonumber\\
		&+Q_i\big(x_i(0)-z_i)
		+\sum_{k=0}^{\mathcal{T}-1}D_i^{\top}R_i\big(u_i(k)-D_iz_i\big)
		\Big],\nonumber\\
	\nabla_{\mathbf{u}_i}J_i(\mathbf{u}_i,z_i)=&
	2\begin{bmatrix}
			\small(A_i^{\mathcal{T}-1}B_i)^{\top}P_i\big(A_i^{\mathcal{T}}x_i(0)\\
			+\sum_{j=0}^{\mathcal{T}-1}A_i^{\mathcal{T}-1-j}B_iu_i(j)-z_i\big) \\
			+\sum_{k=1}^{\mathcal{T}-1}\Big[ (A_i^{k-1}B_i)^{\top}Q_i\big(A_i^{k}x_i(0)\\
			+\sum_{j=0}^{k-1}A_i^{k-1-j}B_iu_i(j)-z_i\big)\Big] \\
			+R_i\big(u_i(0)-D_iz_i\big), \\
			(A_i^{\mathcal{T}-2}B_i)^{\top}P_i\big(A_i^{\mathcal{T}}x_i(0)\\
			+\sum_{j=0}^{\mathcal{T}-1}A_i^{\mathcal{T}-1-j}B_iu_i(j)-z_i\big) \\
			+\sum_{k=2}^{\mathcal{T}-1}\Big[ (A_i^{k-2}B_i)^{\top}Q_i\big(A_i^{k}x_i(0)\\
			+\sum_{j=0}^{k-1}A_i^{k-1-j}B_iu_i(j)-z_i\big)\Big] \\
			+R_i\big(u_i(1)-D_iz_i\big), \\
			\vdots\\
			B_i^{\top}P_i\big(A_i^{\mathcal{T}}x_i(0)\\
			+\sum_{j=0}^{{\mathcal{T}}-1}A_i^{{\mathcal{T}}-1-j}B_iu_i(j)-z_i\big) \\
			+R_i\big(u_i({\mathcal{T}}-1)-D_iz_i\big)
	\end{bmatrix}
	.\nonumber
\end{align}}%
Defining $\nabla_{[\mathbf{u}_i,z_i]} J_i\triangleq \big[\nabla_{\mathbf{u}_i}J_i^{\top}(\mathbf{u}_i,z_i),\nabla_{z_i}J_i^{\top}(\mathbf{u}_i,z_i)\big]^{\top}$ and {using the above results}, it is verified that
\begin{align}
	&\left\|\nabla_{[\mathbf{u}_i^p,z_i^p]} J_i(\mathbf{u}_i^p,z_i^p)-\nabla_{[\mathbf{u}_i^q,z_i^q]} J_i(\mathbf{u}_i^q,z_i^q)\right\|\nonumber\\
	\leq&L_{\delta_i} \left\|[ \mathbf{u}_i^{p^{\top}},z_i^{p^{\top}}]^{\top}-[ \mathbf{u}_i^{q^{\top}},z_i^{q^{\top}}]^{\top} \right\|,
\end{align}
where $L_{\delta_i}$ is the Lipschitz constant of the gradient of $J_i$. 

Recalling the local update rules in Algorithm~2, 
the compact form of them can be written as
\begin{subequations}
	\begin{align}
		\mathbf{U}^{q+1} =~& 
		\mathscr{P}_{\tilde{\mathcal{U}}}\big\{\mathbf{U}^q-G_U\nabla_\mathbf{U}L_{\rho}(\mathbf{U}^q,\mathbf{Z}^q,\mathbf{\Lambda}^q)\big\}, \label{CA1}\\ 
		\mathbf{Z}^{q+1}=~& \mathscr{P}_{\tilde{\mathcal{Z}}}\big\{\mathbf{Z}^q-G_Z\nabla_\mathbf{Z}L_{\rho}(\mathbf{U}^q,\mathbf{Z}^q,\mathbf{\Lambda}^q)\big\}, \label{CA2} \\
		\mathbf{\Lambda}^{q+1} = ~& \mathbf{\Lambda}^q+\rho \mathbf{Z}^{q+1},\label{CA3}
	\end{align}\label{Centralize}%
\end{subequations}%
where $G_U$ $=\operatorname{diag}$ $\{G_{u_1}$ $,G_{u_2},$ $\ldots,$ $G_{u_N}\}$, $G_Z=\operatorname{diag}\{G_{z_1},$ $G_{z_2},$ $\ldots,$ $G_{z_N}\}$, and
 $\tilde{\mathcal{U}}$ and $\tilde{\mathcal{Z}}$ correspond to $\mathbf{U}$ and $\mathbf{Z}$, respectively. Then, it follows from Lemma 1 that $\forall \mathbf{U}\in\tilde{\mathcal{U}}$, $\forall \mathbf{Z}\in\tilde{\mathcal{Z}}$,
\begin{align}
	(\mathbf{U}-\mathbf{U}^{q+1})^{\top}\Big[\mathbf{U}^{q+1}-\big(\mathbf{U}^{q}-G_U&\nabla_\mathbf{U}\mathbf{J}(\mathbf{U}^q,\mathbf{Z}^q)\big)\Big]\geq 0,\nonumber\\
	(\mathbf{Z}-\mathbf{Z}^{q+1})^{\top}\Big\{\mathbf{Z}^{q+1}-\big[\mathbf{Z}^{q}-G_Z(&\nabla_\mathbf{Z}\mathbf{J}(\mathbf{U}^q,\mathbf{Z}^q)\nonumber\\
	&+\tilde{\mathcal{L}}\mathbf{\Lambda}^q+\rho\tilde{\mathcal{L}}\mathbf{Z}^q)\big]\Big\}\geq 0,\nonumber
\end{align}
which are equivalent to $\forall \mathbf{U}\in\tilde{\mathcal{U}}$, $\forall \mathbf{Z}\in\tilde{\mathcal{Z}}$,
\begin{align}
	(\mathbf{U}&-\mathbf{U}^{q+1})^{\top}\Big[\nabla_\mathbf{U}\mathbf{J}(\mathbf{U}^q,\mathbf{Z}^q)+G_U^{-1}(\mathbf{U}^{q+1}-\mathbf{U}^{q})\Big]\geq 0\label{opt1}\\
	\text{and}~	&(\mathbf{Z}-\mathbf{Z}^{q+1})^{\top}\Big[\nabla_\mathbf{Z}\mathbf{J}(\mathbf{U}^q,\mathbf{Z}^q)+\tilde{\mathcal{L}}\mathbf{\Lambda}^q+\rho\tilde{\mathcal{L}}\mathbf{Z}^q\nonumber\\
	&\qquad\quad\qquad\qquad\qquad+G_Z^{-1}(\mathbf{Z}^{q+1}-\mathbf{Z}^{q}) \Big]\geq 0\label{opt2}
\end{align}
owing to $G_U> \mathbf{0}$ and $G_Z>\mathbf{0}$.
Notice that 
\begin{align}
	&(\mathbf{U}-\mathbf{U}^{q+1})^{\top}\nabla_\mathbf{U}\mathbf{J}(\mathbf{U}^q,\mathbf{Z}^q)+	(\mathbf{Z}-\mathbf{Z}^{q+1})^{\top}\nabla_\mathbf{Z}\mathbf{J}(\mathbf{U}^q,\mathbf{Z}^q)\nonumber\\
	=&(\mathbf{U}-\mathbf{U}^{q})^{\top}\nabla_\mathbf{U}\mathbf{J}(\mathbf{U}^q,\mathbf{Z}^q)+	(\mathbf{U}^q-\mathbf{U}^{q+1})^{\top}\nabla_\mathbf{U}\mathbf{J}(\mathbf{U}^q,\mathbf{Z}^q)\nonumber\\
	&+(\mathbf{Z}-\mathbf{Z}^{q})^{\top}\nabla_\mathbf{Z}\mathbf{J}(\mathbf{U}^q,\mathbf{Z}^q)+	(\mathbf{Z}^q-\mathbf{Z}^{q+1})^{\top}\nabla_\mathbf{Z}\mathbf{J}(\mathbf{U}^q,\mathbf{Z}^q)\nonumber\\
	\leq & \mathbf{J}(\mathbf{U},\mathbf{Z})-\mathbf{J}(\mathbf{U}^{q+1},\mathbf{Z}^{q+1})\nonumber\\
	&+\frac{1}{2}\|\mathbf{U}^q-\mathbf{U}^{q+1}\|^2_{L_{\delta}^U}+\frac{1}{2}\|\mathbf{Z}^q-\mathbf{Z}^{q+1}\|^2_{L_{\delta}^Z}\label{Lip1}
\end{align}
due to Lemma 2, where $L_{\delta}^U=\operatorname{diag}\{L_{\delta_1},L_{\delta_2},\ldots,L_{\delta_N}\}\otimes I_{m\mathcal{T}}$ and $L_{\delta}^Z=\operatorname{diag}\{L_{\delta_1},L_{\delta_2},\ldots,L_{\delta_N}\}\otimes I_{n}$.

Then, from the update rule (\ref{CA3}), it is shown that
\begin{align*}
	 \left(\mathbf{\Lambda}-\mathbf{\Lambda}^{q+1}\right)^{\top}\left[(1 / \rho) \tilde{\mathcal{L}}\left(\mathbf{\Lambda}^{q+1}-\mathbf{\Lambda}^{q}\right)-\tilde{\mathcal{L}} \mathbf{Z}^{q+1}\right]=0,
\end{align*}
and $\mathbf{\Lambda}^{q}=\mathbf{\Lambda}^{q+1}-\rho\mathbf{Z}^{q+1}$,
	which yields
\begin{align}
	&(\mathbf{Z}-\mathbf{Z}^{q+1})^{\top}\left(\tilde{\mathcal{L}} \mathbf{\Lambda}^{q}+\rho \tilde{\mathcal{L}} \mathbf{Z}^{q}\right)\nonumber \\
	=&\left(\mathbf{\Lambda}-\mathbf{\Lambda}^{q+1}\right)^{\top}\left[(1 / \rho) \tilde{\mathcal{L}}\left(\mathbf{\Lambda}^{q+1}-\mathbf{\Lambda}^{q}\right)-\tilde{\mathcal{L}} \mathbf{Z}^{q+1}\right]\nonumber \\
	&+\left(\mathbf{Z}-\mathbf{Z}^{q+1}\right)^{\top} \tilde{\mathcal{L}}\left[\mathbf{\Lambda}^{q+1}+\rho\left(\mathbf{Z}^{q}-\mathbf{Z}^{q+1}\right)\right]\nonumber \\
	=&-\frac{1}{2}\left\|\mathbf{\Lambda}-\mathbf{\Lambda}^{q+1}\right\|_{\frac{1}{\rho} \tilde{\mathcal{L}}}^{2}+\frac{1}{2}\left\|\mathbf{\Lambda}-\mathbf{\Lambda}^{q}\right\|_{\frac{1}{\rho} \tilde{\mathcal{L}}}^{2} \nonumber\\
	&-\frac{1}{2}\left\|\mathbf{\Lambda}^{q}-\mathbf{\Lambda}^{q+1}\right\|_{\frac{1}{\rho} \tilde{\mathcal{L}}}^{2}-\left(\mathbf{\Lambda}-\mathbf{\Lambda}^{q+1}\right)^{\top} \tilde{\mathcal{L}} \mathbf{Z} \label{Lip3}\\
	&+\left(\mathbf{Z}-\mathbf{Z}^{q+1}\right)^{\top} \tilde{\mathcal{L}} \mathbf{\Lambda}-\left(\mathbf{Z}-\mathbf{Z}^{q+1}\right)^{\top} \rho\tilde{\mathcal{L}}\left(\mathbf{Z}^{q+1}-\mathbf{Z}^{q}\right).\nonumber
\end{align}
Moreover, it is obtained that
{\begin{align}
		&(\mathbf{U}-\mathbf{U}^{q+1})^{\top}G_U^{-1}(\mathbf{U}^{q+1}-\mathbf{U}^q)\label{Lip2}\\
		=-\frac{1}{2}&\Big[ \|\mathbf{U}-\mathbf{U}^{q+1}\|^2_{G_U^{-1}}-\|\mathbf{U}-\mathbf{U}^{q}\|^2_{G_U^{-1}}+\|\mathbf{U}^{q+1}-\mathbf{U}^q\|^2_{G_U^{-1}}\Big],\nonumber\\
		&(\mathbf{Z}-\mathbf{Z}^{q+1})^{\top}G_Z^{-1}(\mathbf{Z}^{q+1}-\mathbf{Z}^q)\label{Lip4}\\
		=-\frac{1}{2}&\Big[ \|\mathbf{Z}-\mathbf{Z}^{q+1}\|^2_{G_Z^{-1}}-\|\mathbf{Z}-\mathbf{Z}^{q}\|^2_{G_Z^{-1}}+\|\mathbf{Z}^{q+1}-\mathbf{Z}^q\|^2_{G_Z^{-1}}\Big],\nonumber\\
	&\left(\mathbf{Z}-\mathbf{Z}^{q+1}\right)^{\top}\rho\tilde{\mathcal{L}}\left(\mathbf{Z}^{q+1}-\mathbf{Z}^{q}\right)\label{Lip5}\\
	=-\frac{1}{2}&\Big[ \|\mathbf{Z}-\mathbf{Z}^{q+1}\|^2_{\rho\tilde{\mathcal{L}}}-\|\mathbf{Z}-\mathbf{Z}^{q}\|^2_{\rho\tilde{\mathcal{L}}}+\|\mathbf{Z}^{q+1}-\mathbf{Z}^q\|^2_{\rho\tilde{\mathcal{L}}}\Big].\nonumber
\end{align}

Therefore, adding (\ref{opt1}) and (\ref{opt2}) and using (\ref{Lip1})-(\ref{Lip5}), it derives that:
\begin{align}
	\begin{aligned}
		0
		\leq & \mathbf{J}(\mathbf{U}, \mathbf{Z})-\mathbf{J}\left(\mathbf{U}^{q+1}, \mathbf{Z}^{q+1}\right)-\left(\mathbf{\Lambda}-\mathbf{\Lambda}^{q+1}\right)^ {\top} \tilde{\mathcal{L}}\mathbf{Z} \\
		&+\left(\mathbf{Z}-\mathbf{Z}^{q+1}\right)^ {\top} \tilde{\mathcal{L}} \mathbf{\Lambda}+\frac{1}{2}\left\|\mathbf{\Theta}-\mathbf{\Theta}^{q}\right\|_{\mathcal{G}_{1}}^{2} \\
		&-\frac{1}{2}\left\|\mathbf{\Theta}-\mathbf{\Theta}^{q+1}\right\|_{\mathcal{G}_{1}}^{2}-\frac{1}{2}\left\|\mathbf{\Theta}^{q}-\mathbf{\Theta}^{q+1}\right\|_{\mathcal{G}_{2}}^{2},
	\end{aligned}\label{M1}
\end{align}
where $\mathbf\Theta
=
\operatorname{col}
\{\mathbf U,\mathbf Z,\mathbf\Lambda\}$, $\mathcal{G}_1=\text{diag}\{{G}_U^{-1},{G}_Z^{-1}-\rho\tilde{\mathcal{L}},\frac{1}{\rho}\tilde{\mathcal{L}}\}$, and $\mathcal{G}_2=\text{diag}\{{G}_U^{-1}-L_{\delta}^U,G_Z^{-1}-\rho\tilde{\mathcal{L}}-L_{\delta}^Z,\frac{1}{\rho}\tilde{\mathcal{L}}\}$.

By Assumption~2 and the feasibility of Problem~1, the
relative interior constraint qualification holds. Hence, there
exists an optimal Lagrange multiplier
$\mathbf\Lambda^\star$ associated with the consensus constraint \eqref{LZ},
such that
$(\mathbf U^\star,\mathbf Z^\star,\mathbf\Lambda^\star)$
is a saddle point of the Lagrangian
\(
L(\mathbf U,\mathbf Z,\mathbf\Lambda)
=
\mathbf{J}(\mathbf U,\mathbf Z)
+
\mathbf\Lambda^\top\tilde{\mathcal L}\mathbf Z.
\)
Therefore, we have
	\begin{align*}
		&\mathbf{J}(\mathbf{U}^{q+1},\mathbf{Z}^{q+1})-\mathbf{J}(\mathbf{U}^{\star},\mathbf{Z}^{\star})+(\mathbf{\Lambda}^{\star}-\mathbf{\Lambda}^{q+1})^{\top}\tilde{\mathcal{L}}\mathbf{Z}^{\star}\\
		&-(\mathbf{Z}^{\star}-\mathbf{Z}^{q+1})^{\top}\tilde{\mathcal{L}}\mathbf{\Lambda}^{\star}\\
		=&\mathbf{J}(\mathbf{U}^{q+1},\mathbf{Z}^{q+1})-\mathbf{J}(\mathbf{U}^{\star},\mathbf{Z}^{\star})-(\mathbf{Z}^{\star}-\mathbf{Z}^{q+1})^{\top}\tilde{\mathcal{L}}\mathbf{\Lambda}^{\star}\\
		=&L(\mathbf{U}^{q+1},\mathbf{Z}^{q+1},\mathbf{\Lambda}^{\star})-L(\mathbf{U}^{\star},\mathbf{Z}^{\star},\mathbf{\Lambda}^{\star})\\
		\geq&0,%
	\end{align*}%
	where the last inequality holds due to the saddle-point interpretation \cite[Chapter 5.4.2]{Boyd2004}.
	Then, 
	letting
	\(
	\mathbf\Theta^\star
	=
	\operatorname{col}
	\{\mathbf U^\star,\mathbf Z^\star,\mathbf\Lambda^\star\}
	\),
	it follows from (\ref{M1}) that
\begin{align}
	&\frac{1}{2}\Big[\left\|\mathbf{\Theta}^{\star}-\mathbf{\Theta}^{q}\right\|_{\mathcal{G}_{1}}^{2}-\left\|\mathbf{\Theta}^{\star}-\mathbf{\Theta}^{q+1}\right\|_{\mathcal{G}_{1}}^{2}-\left\|\mathbf{\Theta}^{q}-\mathbf{\Theta}^{q+1}\right\|_{\mathcal{G}_{2}}^{2}\Big] \nonumber\\
	\geq &L(\mathbf{U}^{q+1},\mathbf{Z}^{q+1},\mathbf{\Lambda}^{\star})-L(\mathbf{U}^{\star},\mathbf{Z}^{\star},\mathbf{\Lambda}^{\star})
	\geq 0,\label{M2}
\end{align}
which implies
\begin{align}
	\begin{aligned}
		&\frac{1}{2}\left\|\mathbf{\Theta}^{\star}-\mathbf{\Theta}^{q}\right\|_{\mathcal{G}_{1}}^{2}-\frac{1}{2}\left\|\mathbf{\Theta}^{\star}-\mathbf{\Theta}^{q+1}\right\|_{\mathcal{G}_{1}}^{2}  
		\\
		\geq & \frac{1}{2}\left\|\mathbf{\Theta}^{q}-\mathbf{\Theta}^{q+1}\right\|_{\mathcal{G}_{2}}^{2} .
	\end{aligned}\label{M3}
\end{align}
According to condition~\eqref{conver}, we have
\(
G_U^{-1}-L_\delta^U>\mathbf 0.
\)
Moreover, letting
$D_{\mathcal L}
=\operatorname{diag}\{\mathcal L_{11},\ldots,\mathcal L_{NN}\}$,
for the undirected communication graph it holds that
\(
2D_{\mathcal L}-\mathcal L\geq\mathbf 0,
\)
which yields
$
	G_Z^{-1}-L_\delta^Z-\rho\tilde{\mathcal L}>\mathbf 0$ and
$
G_Z^{-1}-\rho\tilde{\mathcal L}
>\mathbf 0.
$
Since $\tilde{\mathcal L}\geq\mathbf 0$, it follows that
$
\mathcal G_1\geq\mathbf 0$,
\(\mathcal G_2\geq\mathbf 0,
\)
with the first two diagonal blocks of both matrices being
positive definite.
Therefore, \eqref{M3} yields that $\{\|\mathbf{\Theta}^{\star}-\mathbf{\Theta}^{q}\|_{\mathcal{G}_{1}}$, $q=0,1,\ldots \}$ is non-increasing and the primal
sequences $\{\mathbf{U}^q\}$ and $\{\mathbf{Z}^q\}$ are bounded.
Summing~(\ref{M2}) from $q=0$ to $q=t-1$ and dropping the
nonnegative terms yields $\sum_{q=0}^{t-1}
\left[
L(\mathbf U^{q+1},\mathbf Z^{q+1},\mathbf \Lambda^\star)
- L(\mathbf U^\star,\mathbf Z^\star,\mathbf\Lambda^\star)
\right]
\le
\frac12
\|\mathbf\Theta^\star-\mathbf\Theta^0\|_{\mathcal G_1}^2$.
Since each term in the summation is nonnegative, it follows that
\begin{align}
\lim_{q \rightarrow+\infty}\big[L(\mathbf U^{q},\mathbf Z^{q},\mathbf \Lambda^\star)
- L(\mathbf U^\star,\mathbf Z^\star,\mathbf\Lambda^\star)\big]=0.\label{L1}
\end{align}
Since $Q_i>\mathbf 0, R_i>\mathbf 0,P_i>\mathbf{0}$, 
the quadratic cost $\mathbf J(\mathbf U,\mathbf Z)$ is
strongly convex with respect to $(\mathbf U,\mathbf Z)$.
Moreover, the term
$\mathbf\Lambda^{\star\top}\tilde{\mathcal L}\mathbf Z$
is linear in $\mathbf Z$. Hence,
$L(\mathbf U,\mathbf Z,\mathbf\Lambda^\star)$ is also strongly
convex with respect to $(\mathbf U,\mathbf Z)$. 
Therefore, there
exists a constant $\mu>0$ such that
\(
L(\mathbf U^{q},\mathbf Z^{q},\mathbf \Lambda^\star)
- L(\mathbf U^\star,\mathbf Z^\star,\mathbf\Lambda^\star)
\geq \frac{\mu}{2}(\|\mathbf U^{q}-\mathbf U^{\star}\|^2+\|\mathbf Z^{q}-\mathbf Z^{\star}\|^2)\).
Together with \eqref{L1}, it follows that the sequence $\{(\mathbf{U}^{q},\mathbf{Z}^{q})$, $q=1,2,\ldots\}$ generated by Algorithm~2 converges to the optimal solution $(\mathbf{U}^{\star},\mathbf{Z}^{\star})$,
which completes the proof.\qed

\section{Proof of Corollary~1}

Before the proof begins, two useful lemmas are given.

\begin{lemma}\label{pro3}
	The candidate solution $(\hat{\mathbf{U}},\hat{\mathbf{Z}},\hat{\mathbf{\Lambda}})$ defined in (\ref{can}) satisfies that 
	\begin{align*}
		&(\hat{\mathbf{Z}}-\mathbf{Z})^{\top}\tilde{\mathcal{L}}\hat{\mathbf{\Lambda}}-(\hat{\mathbf{\Lambda}}-\mathbf{\Lambda})^{\top}\tilde{\mathcal{L}}\hat{\mathbf{Z}}\\
		=&\frac{1}{t}\sum_{q=1}^{t}\big[(\mathbf{Z}^q-\mathbf{Z})^\top \tilde{\mathcal{L}}\mathbf{\Lambda}^q-(\mathbf{\Lambda}^q-\mathbf{\Lambda)}^{\top}\tilde{\mathcal{L}}\mathbf{Z}^q\big]
	\end{align*}
	for all $\mathbf{\Lambda}$ and $\mathbf{Z}\in\tilde{\mathcal{Z}}$.
\end{lemma}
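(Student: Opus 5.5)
Expanding both sides and comparing them term by term, the identity reduces to the symmetry of $\tilde{\mathcal{L}}$. Write $\hat{\mathbf{Z}}=\frac1t\sum_{q=1}^t\mathbf{Z}^q$ and $\hat{\mathbf{\Lambda}}=\frac1t\sum_{q=1}^t\mathbf{\Lambda}^q$ as in (\ref{can}), and note that $\tilde{\mathcal{L}}=\mathcal{L}\otimes I_n$ is symmetric because the graph is undirected. Hence $\mathbf{a}^\top\tilde{\mathcal{L}}\mathbf{b}=\mathbf{b}^\top\tilde{\mathcal{L}}\mathbf{a}$ for all vectors $\mathbf{a},\mathbf{b}$.

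\emph{Left-hand side.} Expanding the products gives
\begin{align*}
&(\hat{\mathbf{Z}}-\mathbf{Z})^{\top}\tilde{\mathcal{L}}\hat{\mathbf{\Lambda}}-(\hat{\mathbf{\Lambda}}-\mathbf{\Lambda})^{\top}\tilde{\mathcal{L}}\hat{\mathbf{Z}}\\
=&\hat{\mathbf{Z}}^\top\tilde{\mathcal{L}}\hat{\mathbf{\Lambda}}-\hat{\mathbf{\Lambda}}^\top\tilde{\mathcal{L}}\hat{\mathbf{Z}}-\mathbf{Z}^\top\tilde{\mathcal{L}}\hat{\mathbf{\Lambda}}+\mathbf{\Lambda}^\top\tilde{\mathcal{L}}\hat{\mathbf{Z}}.
\end{align*}
By symmetry the first two terms cancel, so only the terms linear in the averages remain:
\[
-\mathbf{Z}^\top\tilde{\mathcal{L}}\hat{\mathbf{\Lambda}}+\mathbf{\Lambda}^\top\tilde{\mathcal{L}}\hat{\mathbf{Z}}.
\]

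\emph{Right-hand side.} Expanding each summand in the same way, the quadratic cross terms $\mathbf{Z}^{q\top}\tilde{\mathcal{L}}\mathbf{\Lambda}^q-\mathbf{\Lambda}^{q\top}\tilde{\mathcal{L}}\mathbf{Z}^q$ vanish by symmetry. Each summand therefore reduces to
\[
-\mathbf{Z}^\top\tilde{\mathcal{L}}\mathbf{\Lambda}^q+\mathbf{\Lambda}^\top\tilde{\mathcal{L}}\mathbf{Z}^q.
\]
This expression is linear in $(\mathbf{Z}^q,\mathbf{\Lambda}^q)$. Averaging over $q=1,\dots,t$ and pulling $\frac1t\sum_q$ inside the linear maps yields exactly $-\mathbf{Z}^\top\tilde{\mathcal{L}}\hat{\mathbf{\Lambda}}+\mathbf{\Lambda}^\top\tilde{\mathcal{L}}\hat{\mathbf{Z}}$, which matches the left-hand side.

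There is no real obstacle. The only point requiring care is to state explicitly that the bilinear terms, which do not commute with averaging, vanish on both sides because $\tilde{\mathcal{L}}^\top=\tilde{\mathcal{L}}$. The restriction $\mathbf{Z}\in\tilde{\mathcal{Z}}$ is not needed for the identity; it only matters where the lemma is later applied.
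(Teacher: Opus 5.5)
Your proof is correct and essentially matches the paper's: both use $\tilde{\mathcal{L}}^\top=\tilde{\mathcal{L}}$ to cancel the bilinear terms, and what remains is linear and commutes with averaging. The only difference is that the paper passes through the intermediate form $(\hat{\mathbf{Z}}-\mathbf{Z})^{\top}\tilde{\mathcal{L}}\mathbf{\Lambda}-(\hat{\mathbf{\Lambda}}-\mathbf{\Lambda})^{\top}\tilde{\mathcal{L}}\mathbf{Z}$ rather than your reduced expression $-\mathbf{Z}^\top\tilde{\mathcal{L}}\hat{\mathbf{\Lambda}}+\mathbf{\Lambda}^\top\tilde{\mathcal{L}}\hat{\mathbf{Z}}$, and your remark that $\mathbf{Z}\in\tilde{\mathcal{Z}}$ is not needed is also correct.
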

\noindent
\textbf{Proof of Lemma \ref{pro3}.} It is verified that $\forall\mathbf{\Lambda}, \forall\mathbf{Z}\in\tilde{\mathcal{Z}}$,
\begin{align}
	\begin{aligned}
		&(\hat{\mathbf{Z}}-\mathbf{Z})^{\top}\tilde{\mathcal{L}}\hat{\mathbf{\Lambda}}-(\hat{\mathbf{\Lambda}}-\mathbf{\Lambda})^{\top}\tilde{\mathcal{L}}\hat{\mathbf{Z}}\\
		=&(\hat{\mathbf{Z}}-\mathbf{Z})^{\top}\tilde{\mathcal{L}}{\mathbf{\Lambda}}-(\hat{\mathbf{\Lambda}}-\mathbf{\Lambda})^{\top}\tilde{\mathcal{L}}{\mathbf{Z}}\\	
		=&\frac{1}{t}\sum_{q=1}^{t}\big[(\mathbf{Z}^q-\mathbf{Z})^\top \tilde{\mathcal{L}}\mathbf{\Lambda}-(\mathbf{\Lambda}^q-\mathbf{\Lambda)}^{\top}\tilde{\mathcal{L}}\mathbf{Z}\big]\\
		=&\frac{1}{t}\sum_{q=1}^{t}\big[(\mathbf{Z}^q-\mathbf{Z})^\top \tilde{\mathcal{L}}\mathbf{\Lambda}^q-(\mathbf{\Lambda}^q-\mathbf{\Lambda)}^{\top}\tilde{\mathcal{L}}\mathbf{Z}^q\big],
	\end{aligned}
\end{align}
thereby completing the proof. \qed

\begin{lemma}\label{lemma4}
	(\cite{Xu2018a})
	Suppose that $\bar{x}\in\mathbb{X}$ is an approximate solution of the convex optimization problem $f^{\star}\triangleq\inf\{f(x)\mid Ax-b=\mathbf{0},x\in\mathbb{X}\}$. If there exist $\gamma>0$ and $\epsilon>0$ satisfying 
	\begin{align*}
		f(\bar{x})-f^{\star}+\gamma\|A \bar{x}-b\| \leq \epsilon,\left\|\lambda^{\star}\right\| < \gamma,
	\end{align*}
	then, it follows
	\begin{align}
		f(\bar{x})-f^{\star} \leq \epsilon,\|A \bar{x}-b\| \leq \frac{\epsilon}{\gamma-\left\|\lambda^{\star}\right\|},
	\end{align}
	where {$\lambda^{{\star}}$} is an optimal Lagrangian multiplier associated with the constraint $Ax-b=\mathbf{0}$ in the problem $f^{\star}$.
\end{lemma}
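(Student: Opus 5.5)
The plan is to derive both estimates directly from the hypothesis $f(\bar{x})-f^{\star}+\gamma\|A\bar{x}-b\|\leq\epsilon$, combined with a lower bound on $f(\bar{x})-f^{\star}$ supplied by the optimal multiplier $\lambda^\star$. Throughout, write $r\triangleq\|A\bar{x}-b\|\geq 0$.

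\emph{First estimate.} Since $\gamma>0$ and $r\geq 0$, the term $\gamma r$ is nonnegative. Dropping it from the hypothesis gives $f(\bar{x})-f^{\star}\leq\epsilon$ immediately.

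\emph{Lower bound from the multiplier.} The notion of ``optimal Lagrange multiplier'' is used here in the same sense as in the proof of Theorem~1: $\lambda^\star$ comes with zero duality gap, so that
\begin{align*}
f^{\star}=\inf_{x\in\mathbb{X}}\big\{f(x)+\lambda^{\star\top}(Ax-b)\big\}.
\end{align*}
Evaluating this infimum at the particular point $\bar{x}\in\mathbb{X}$ yields $f^{\star}\leq f(\bar{x})+\lambda^{\star\top}(A\bar{x}-b)$. Applying the Cauchy--Schwarz inequality to the inner product then gives
\begin{align*}
f(\bar{x})-f^{\star}\geq-\lambda^{\star\top}(A\bar{x}-b)\geq-\|\lambda^{\star}\|\,r.
\end{align*}

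\emph{Second estimate.} Substituting this lower bound into the hypothesis gives $-\|\lambda^\star\|r+\gamma r\leq\epsilon$, that is, $(\gamma-\|\lambda^\star\|)r\leq\epsilon$. Because $\|\lambda^\star\|<\gamma$, the factor $\gamma-\|\lambda^\star\|$ is strictly positive. Dividing by it yields $\|A\bar{x}-b\|\leq\epsilon/(\gamma-\|\lambda^\star\|)$.

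\emph{Main obstacle.} The only substantive step is the lower bound, which needs the saddle-point (strong duality) property of $\lambda^\star$. In the setting where this lemma is applied, namely Problem~1 with the consensus constraint $\tilde{\mathcal{L}}\mathbf{Z}=\mathbf{0}$, that property is already guaranteed by Assumption~2 through the relative-interior constraint qualification, as in the proof of Theorem~1. I would therefore state it explicitly as the meaning of ``optimal multiplier'' rather than re-derive it. Everything else in the argument is elementary algebra.
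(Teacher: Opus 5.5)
Your proof is correct. The paper gives no proof of its own and simply cites \cite{Xu2018a}; your argument is essentially the standard one from that source. It uses the saddle-point (zero-duality-gap) property of $\lambda^\star$ to get $f(x)-f^{\star}+\lambda^{\star\top}(Ax-b)\geq 0$ for all $x\in\mathbb{X}$, applies Cauchy--Schwarz to obtain $f(\bar{x})-f^{\star}\geq-\|\lambda^{\star}\|\,\|A\bar{x}-b\|$, and combines this with the hypothesis to get both bounds.
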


\noindent
\textbf{Then, the proof of Corollary~1 is given as follows.}

\vspace{4pt}
\noindent
\textbf{Proof of Corollary~1.} First, $\forall\mathbf{\Lambda}$, $\forall\mathbf{U}\in\tilde{\mathcal{U}}$, $\forall\mathbf{Z}\in\tilde{\mathcal{Z}}$, we have
\begin{align}
	&J(\hat{\mathbf{U}},\hat{\mathbf{Z}})-J({\mathbf{U}},{\mathbf{Z}})+(\hat{\mathbf{Z}}-\mathbf{Z})^{\top}\tilde{\mathcal{L}}\hat{\mathbf{\Lambda}}-(\hat{\mathbf{\Lambda}}-\mathbf{\Lambda})^{\top}\tilde{\mathcal{L}}\hat{\mathbf{Z}}\nonumber\\
	&\leq \frac{1}{t}\sum_{q=1}^{t}\big[J({\mathbf{U}^q},{\mathbf{Z}^q})-J({\mathbf{U}},{\mathbf{Z}})+(\mathbf{Z}^q-\mathbf{Z})^\top \tilde{\mathcal{L}}\mathbf{\Lambda}^q\nonumber\\
	&~~~~~~~~~~~~~~~~~~~~~~~~~~~~~~~~~~~~~~-(\mathbf{\Lambda}^q-\mathbf{\Lambda})^{\top}\tilde{\mathcal{L}}\mathbf{Z}^q\big],\nonumber\\
	&\leq \frac{1}{2t}\sum_{q=1}^{t}\big(\left\|\mathbf{\Theta}-\mathbf{\Theta}^{q-1}\right\|_{\mathcal{G}_{1}}^{2}-\left\|\mathbf{\Theta}-\mathbf{\Theta}^{q}\right\|_{\mathcal{G}_{1}}^{2}  \big)\\
	&\leq \frac{1}{2t}\big(\left\|\mathbf{U}-\mathbf{U}^{0}\right\|_{G_U^{-1}}^{2}+\left\|\mathbf{Z}-\mathbf{Z}^{0}\right\|_{G_Z^{-1}-\rho\tilde{\mathcal{L}}}^{2}+\left\|\mathbf{\Lambda}-\mathbf{\Lambda}^{0}\right\|_{\frac{\tilde{\mathcal{L}}}{\rho}}^{2}   \big),\nonumber
\end{align}
where the first inequality is obtained due to the convexity of $J({\mathbf{U}},{\mathbf{Z}})$ and Lemma 3, and the second inequality is valid owing to (\ref{M1}).

Then, define the set $\mathcal{B}_\gamma =\{\mathbf{\Lambda}|\|\mathbf{\Lambda}\|\leq \gamma\}$. For the optimal solution $({\mathbf{U}^{\star}},{\mathbf{Z}^{\star}})$, it follows that
{\small\begin{align}
	&\sup_{{\mathbf{\Lambda}}\in \mathcal{B}_\gamma}\Big\{J(\hat{\mathbf{U}},\hat{\mathbf{Z}})-J({\mathbf{U}^{\star}},{\mathbf{Z}^{\star}})+(\hat{\mathbf{Z}}-\mathbf{Z}^{\star})^{\top}\tilde{\mathcal{L}}\hat{\mathbf{\Lambda}}-(\hat{\mathbf{\Lambda}}-\mathbf{\Lambda})^{\top}\tilde{\mathcal{L}}\hat{\mathbf{Z}}\Big\}\nonumber\\
	=&\sup_{{\mathbf{\Lambda}}\in \mathcal{B}_\gamma}\big\{J(\hat{\mathbf{U}},\hat{\mathbf{Z}})-J({\mathbf{U}^{\star}},{\mathbf{Z}^{\star}})+\mathbf{\Lambda}^{{\top}}\tilde{\mathcal{L}}\hat{\mathbf{Z}}\big\}\nonumber\\
=&J(\hat{\mathbf{U}},\hat{\mathbf{Z}})-J({\mathbf{U}^{\star}},{\mathbf{Z}^{\star}})+\gamma\|\tilde{\mathcal{L}}\hat{\mathbf{Z}}\|\nonumber\\
	\leq&\sup_{{\mathbf{\Lambda}}\in \mathcal{B}_\gamma}\frac{1}{2t}\Big(\|\mathbf{U}^{\star}-\mathbf{U}^{0}\|_{G_U^{-1}}^{2}+\|\mathbf{Z}^{\star}-\mathbf{Z}^{0}\|_{G_Z^{-1}-\rho\tilde{\mathcal{L}}}^{2}+\|\mathbf{\Lambda}-\mathbf{\Lambda}^{0}\|_{\frac{\tilde{\mathcal{L}}}{\rho}}^{2}  \Big)\nonumber\\
	\leq&\frac{1}{2t}\Big[\|\mathbf{U}^{\star}-\mathbf{U}^{0}\|_{G_U^{-1}}^{2}+\|\mathbf{Z}^{\star}-\mathbf{Z}^{0}\|_{G_Z^{-1}-\rho\tilde{\mathcal{L}}}^{2}+\frac{\gamma^2}{\rho}\lambda_{\max }(\mathcal{L}) \Big].\label{eq:cor_combined}
\end{align}}%
Since $\gamma>\|\mathbf{\Lambda}^\star\|$, applying
Lemma~4 to \eqref{eq:cor_combined} yields the desired conclusion \eqref{rate}.
\qed

\section{Proof of Theorem 2}
\noindent
\textbf{Proof of Theorem 2.} 
Suppose that there exists an optimal solution $\mathbf{u}_i^{\star}(t_k)=\operatorname{col}\big\{u_i^{\star}(0|t_k),u_i^{\star}(1|t_k),\ldots,u_i^{\star}(\mathcal{T}-1|t_k)\big\}$ and $z_i^{\star}(t_k)$ at time $t_k$, $k\geq 0$. Subsequently, the corresponding optimal state $x_i^{\star}(l|t_k)$, $l\in[0,\mathcal{T}]$ of each agent can be obtained according to the system dynamics \eqref{dyn}.
Consider the candidate sequences at the next prediction time $t_{k+1}$ as $\bar{z}_i(t_{k+1})=A_i{z}^{\star}_i(t_k)+B_iD_i{z}^{\star}_i(t_k)={z}^{\star}_i(t_k)$,
{\small\begin{align}
		\bar{u}_i(l|t_{k+1})&=\begin{cases}
			u_i^{\star}(l+\Delta t|t_{k}), ~~~~~~~~~~~~~~~~~~l\in[0,\mathcal{T}-\Delta t), \\
			K_i\big(\bar{x}_i(l|t_{k+1})-\bar{z}_i(t_{k+1})\big)+D_i\bar{z}_i(t_{k+1}),  \\
			~~~~~~~~~~~~~~~~~~~~~~~~~~~~~~l\in[\mathcal{T}-\Delta t,\mathcal{T}-1],
		\end{cases}\label{Ufea}\\
		\bar{x}_i(l|t_{k+1})&=\begin{cases}
			x_i^{\star}(l+\Delta t|t_{k}),  ~~~~~~~~~~~~~~~~~~l\in[0,\mathcal{T}-\Delta t], \\
			(A_i+B_iK_i)^{l-\mathcal{T}+\Delta t}\big(x_i^{\star}(\mathcal{T}|t_k)-z_i^{\star}(t_k)\big)+z_i^{\star}(t_k),\\ ~~~~~~~~~~~~~~~~~~~~~~\qquad\quad
			l\in[\mathcal{T}-\Delta t+1,\mathcal{T}].
		\end{cases}\label{Xfea}
\end{align}}%
It follows from the first condition in \eqref{XTi} that
	$	\bar u_i(l|t_{k+1})\in{\mathcal U}_i$, $\forall l\in[0,\mathcal{T}-1]$.
	According to the second condition in \eqref{XTi}, the corresponding system state satisfies
	$	\bar x_i(l|t_{k+1})\in{\mathcal X}_i$, $\forall l\in[0,\mathcal{T}-1]$, and $	\bar x_i(\mathcal{T}|t_{k+1})\in{\mathcal X}_{i\mathcal{T}}$.
	Therefore, 
	it is verified that \eqref{XTi} guarantees that (\ref{Ufea})-(\ref{Xfea}) are feasible solutions of (\ref{problem1}) at $t_{k+1}$, thereby demonstrating the recursive feasibility of the proposed MPC framework.

For the feasible candidate solution \eqref{Ufea}-\eqref{Xfea}, we define the error variables $\bar{e}_{xi}(l |t_{k+1})=\bar{x}_i(l|t_{k+1})-\bar{z}_i(t_{k+1})$ for $l\in[0,\mathcal{T}]$ and $\bar{e}_{ui}(l |t_{k+1})=\bar{u}_i(l|t_{k+1})-D_i\bar{z}_i(t_{k+1})$ for $l\in[0,\mathcal{T}-1]$. Similarly, given the optimal solution $\big(\mathbf{u}_i^{\star}(t_k),z_i^{\star}(t_k)\big)$, we define the error variables ${e}_{xi}^{\star}(l |t_{k})={x}_i^{\star}(l|t_{k})-{z}_i^{\star}(t_{k})$ for $l\in[0,\mathcal{T}]$ and ${e}_{ui}^{\star}(l |t_{k})={u}_i^{\star}(l|t_{k})-D_i{z}^{\star}_i(t_{k})$ for $l\in[0,\mathcal{T}-1]$.
Then, for the optimal solution $\big(\mathbf{U}^{\star}(t_{k+1}),{\mathbf{Z}}^{\star}(t_{k+1})\big)$ at time $t_{k+1}$, we have
\begin{align}
		&\mathbf{J}(\mathbf{U}^{\star}(t_{k+1}),\mathbf{Z}^{\star}(t_{k+1}))-\mathbf{J}(\mathbf{U}^{\star}(t_{k}),\mathbf{Z}^{\star}(t_{k}))\nonumber\\
		\leq& \mathbf{J}(\bar{\mathbf{U}}(t_{k+1}),\bar{\mathbf{Z}}(t_{k+1}))-\mathbf{J}(\mathbf{U}^{\star}(t_{k}),{\mathbf{Z}}^{\star}(t_{k}))\nonumber\\
		=&\sum_{i=1}^{N}\bigg\{\sum_{l=0}^{\mathcal{T}-1}\Big[\|\bar{e}_{xi}(l| t_{k+1})\|_{Q_{i}}^{2}+\|\bar{e}_{ui}(l |t_{k+1})\|_{R_{i}}^{2}\Big]\nonumber\\
		&~~~~~~~~~~~~+\|\bar{e}_{xi}(\mathcal{T} | t_{k+1})\|_{P_{i}}^{2}-\|e_{xi}^{\star}(\mathcal{T} |t_{k})\|_{P_{i}}^{2}\nonumber\\
		&~~~~~~-
		\sum_{l=0}^{\mathcal{T}-1}\Big[\|e_{xi}^{\star}(l| t_{k})\|_{Q_{i}}^{2}-\|e_{ui}^{\star}(l| t_{k})\|_{R_{i}}^{2}\Big]\bigg\}\nonumber\\
		= & \sum_{i=1}^{N} \bigg\{-\sum_{l=0}^{\Delta t -1}\Big[
		\|e_{xi}^{\star}(l| t_{k})\|_{Q_{i}}^{2}+\|e_{ui}^{\star}(l| t_{k})\|_{R_{i}}^{2}\Big]\nonumber\\
		&~~~~~~~~+\|{e}^{\star}_{xi}(\mathcal{T} | t_{k})\|_{	\Psi_i}^2\bigg\}\nonumber\\
		\leq&{-\sum_{i=1}^{N}\sum_{l=0}^{\Delta t -1}\Big[
			\|e_{xi}^{\star}(l| t_{k})\|_{Q_{i}}^{2}+\|e_{ui}^{\star}(l| t_{k})\|_{R_{i}}^{2}\Big]},\label{J}
\end{align}%
where
$\Psi_i=\sum_{h=0}^{\Delta t -1}\|(A_i+B_iK_i)^{h}\|^2_{Q_i+K_i^{\top}R_iK_i} + \|(A_i+B_iK_i)^{\Delta t}\|_{P_i}^2-P_i$, the second equality holds by substituting (\ref{Ufea})-(\ref{Xfea}), and the penultimate inequality holds due to (\ref{Ki}).

According to~(\ref{J}), the sequence
$\{\mathbf J(\mathbf U^\star(t_k),\mathbf Z^\star(t_k)),
k=0,1,\ldots\}$ is non-increasing since $Q_i>\mathbf{0}$ and
$R_i>\mathbf{0}$.
Moreover, considering
$\mathbf J(\mathbf U^\star(t_k),\mathbf Z^\star(t_k))\geq 0$,
it is lower bounded. Then,
summing~(\ref{J}) from $k=0$ to $k=K$ gives
\begin{align}
	&\sum_{k=0}^{K}\sum_{i=1}^{N}\sum_{l=0}^{\Delta t-1}
	\left[
	\|e_{x_i}^\star(l|t_k)\|_{Q_i}^{2}
	+
	\|e_{u_i}^\star(l|t_k)\|_{R_i}^{2}
	\right]
	\nonumber\\
	\quad&\leq
	\mathbf J(\mathbf U^\star(t_0),\mathbf Z^\star(t_0))
	-
	\mathbf J(\mathbf U^\star(t_{K+1}),\mathbf Z^\star(t_{K+1}))
	\nonumber\\
	\quad&\leq
	\mathbf J(\mathbf U^\star(t_0),\mathbf Z^\star(t_0)).
	\label{summable}
\end{align}
Letting $K\rightarrow\infty$, it follows that
\begin{align}
	\sum_{k=0}^{\infty}\sum_{i=1}^{N}\sum_{l=0}^{\Delta t-1}
	\left(
	\|e_{x_i}^\star(l|t_k)\|_{Q_i}^{2}
	+
	\|e_{u_i}^\star(l|t_k)\|_{R_i}^{2}
	\right)<\infty.
\end{align}
Since all terms in the above summation are nonnegative and
$Q_i>\mathbf{0}$, $R_i>\mathbf{0}$, it follows that
\begin{align}
	\lim_{k\rightarrow\infty}
	e_{x_i}^\star(l|t_k)=\mathbf{0},\qquad
	\lim_{k\rightarrow\infty}
	e_{u_i}^\star(l|t_k)=\mathbf{0} \label{exconv}
\end{align}
for all
$i\in\{1,2,\ldots,N\}$ and $l\in[0,\Delta t-1]$. 

Recall that 
at each prediction time $t_k$, there have
\(
e_{x_i}^\star(l|t_k)
=
x_i^\star(l|t_k)-z_i^\star(t_k)
\) and
\(
z_i^\star(t_k)=z_j^\star(t_k)\),
\(\forall i,j\in\{1,2,\ldots,N\}
\).
Therefore, for any $i,j\in\{1,2,\ldots,N\}$ and
$l\in[0,\Delta t-1]$, we have $
	\|x_i^\star(l|t_k)-x_j^\star(l|t_k)\|
\leq	
	\|x_i^\star(l|t_k)-z_i^\star(t_k)\|
	+
	\|z_i^\star(t_k)-z_j^\star(t_k)\|+
	\|z_j^\star(t_k)-x_j^\star(l|t_k)\|=
	\|e_{x_i}^\star(l|t_k)\|
	+
	\|e_{x_j}^\star(l|t_k)\|.
$
Together with~(\ref{exconv}), this yields
\begin{align}
	\lim_{k\rightarrow\infty}
	\|x_i^\star(l|t_k)-x_j^\star(l|t_k)\|
	=0,
\end{align}
for all $i,j\in\{1,2,\ldots,N\}$ and
$l\in[0,\Delta t-1]$.
Then, under the MPC framework, ${x}_i^{\star}(l|t_{k})$ for $l\in[0,\Delta t-1]$ is the real-time system state over the interval
$[t_k,t_{k+1}-1]$. Hence, we have $\lim_{t\rightarrow \infty}\big( x_i(t)-x_j({t})\big)=\mathbf{0}$, $\forall i,j\in\{1,2,\ldots,N\}$, indicating that the closed-loop multi-agent system asymptotically achieves consensus.
Together with the recursive feasibility
established above, all state and input constraints are satisfied
for all time, which completes the proof.
\qed

\section*{References}
\vspace{-1.5em}
\bibliographystyle{ieeetr}
\bibliography{lq.bib}

\end{document}